\documentclass[aps,prl,reprint,superscriptaddress]{revtex4-2}
\usepackage{cmap}
\usepackage{amsmath, amsfonts, amsthm}
\usepackage{tikz}
\usepackage{subcaption}
\usepackage[hidelinks]{hyperref}
\usepackage{adjustbox}
\usepackage[all,cmtip]{xy}
\usepackage{dcolumn}
\usepackage{multirow}
\usepackage{amssymb}

\newtheorem{theorem}{Theorem}
\newtheorem{corollary}{Corollary}
\newtheorem{conjecture}{Conjecture}
\newtheorem{example}{Example}
\newtheorem{prop}{Proposition}
\newenvironment{psmallmatrix}
  {\left(\begin{smallmatrix}}
  {\end{smallmatrix}\right)}

\newcommand{\drawgenerator}[8]{%
\xymatrix@!0{%
& #8 \ar@{-}[ld]\ar@{.}[dd] \ar@{-}[rr] & & #7 \ar@{-}[ld]  \\%
#1 \ar@{-}[rr] \ar@{-}[dd] &  & #2 \ar@{-}[dd] &            \\%
& #6 \ar@{.}[ld] &  & #5 \ar@{-}[uu] \ar@{.}[ll]       \\%
#3 \ar@{-}[rr] &  & #4 \ar@{-}[ru]                       %
}}

\newcommand{\plaquette}[4]{
\xymatrix@!0{%
#1 \ar@{-}[r] \ar@{-}[d]  & #2 \ar@{-}[d] 
\\
#3 \ar@{-}[r]  & #4
}}

\begin{document}

\title{
Translation-invariant quantum low-density parity-check codes from compactified fracton models
}

    \author{Cassandra M.~Hopkin}
\affiliation{University of California, Davis, CA 95616, USA}
    \author{Victor V.~Albert} 
\affiliation{Joint Center for Quantum Information and Computer Science, NIST/University of Maryland, College Park, MD 20742, USA}
\author{Dominic J.~Williamson}
\affiliation{School of Physics, The University of Sydney, NSW 2006, Australia}

\begin{abstract}
Quantum error-correcting codes with translation symmetry and local checks have been studied extensively, leading to a wide variety of fracton codes in three or more dimensions which lack a complete unifying picture. Recently, the study of translation-invariant codes with long-range checks has revealed impressive performance for small fixed-size instances in two dimensions. Here, we provide a unifying picture for a large family of translation-invariant codes, both local and long-range, that captures many fracton codes and all Abelian Two-Block Group Algebra (A2BGA) codes, including the Bivariate Bicycle (BB) codes. The balanced product structure of A2BGA codes leads to a local parent code that is a hypergraph product fracton model in a higher dimension. Different compactifications of a parent code produce a wide variety of descendant codes which provides a unifying picture for their properties. In particular, all BB codes with the same check weight are derived from a single parent hypergraph product fracton model. This construction allows us to extend Wang and Pryadko's code-parameter bounds for Generalized Bicycle codes to A2BGA codes. We conjecture that the transversal gates and energy barriers of the translation-invariant descendant codes are limited by those of their parent fracton models.
\end{abstract}

\maketitle

% ////// INTRODUCTION //////////////////////////
\section{Introduction}

Errors occur when quantum information is stored, processed, and transmitted due to decoherence from uncontrolled coupling to the environment.
To maintain coherence it is necessary to construct fault-tolerant protocols that detect and correct quantum errors. 
Quantum error-correcting codes were created for this purpose. 
Curiously, these codes have other applications, including as models for topological phases of matter \cite{Kitaev_2003}. 

Quantum error-correcting (QEC) codes with local checks on lattices in three dimensions, or higher, can support exotic fracton phases of matter that are characterized by topological excitations with constrained mobility~\cite{Chamon_2005}. 
The classification of fracton codes poses an interesting challenge for theoretical condensed matter physics~\cite{sorting}. Fracton codes can also support physical properties that are advantageous for the purposes of QEC. In particular, the cubic code has no stringlike logical operators and can be used to implement a partially self-correcting quantum memory \cite{cubicCode, Bravyi_2013}. 
Fracton models support a diverse array of topological excitation mobility structures, coarsely grouped into Type-I and Type-II~\cite{fractonReview}. A complete unifying framework for the classification of fracton models remains an open problem, despite recent progress~\cite{Song_2023, Aasen_2020, Pai_2019, Wickenden_2025, Wickenden_Shirley_2025, Song_2024, Slagle_2019}. 

Local fracton models in three dimensions and higher can also come about from the hypergraph product (HGP) construction (see Ref.~\cite{fractonProduct}), which combines two classical codes to make a quantum code~\cite{ogHGP}.  
Similarly, in Ref.~\cite{orthoplex} codes called \textit{orthoplex models} that exhibit fracton order are constructed via a generalized HGP construction. 

In this work, we combine the HGP construction with twisted boundary conditions (\textit{a.k.a.} compactification) to produce a wide variety of fracton models from a single parent higher dimensional HGP fracton code~\cite{compactifyingFractons}. 
In addition, we show that families of translation-invariant quantum low-density parity-check (qLDPC) codes with checks of fixed weight but potentially growing range are also compactifications of a parent fracton code that has local translation-invariant checks on a higher-dimensional lattice. 
Our analysis is not restricted to bivariate-bicycle (BB) codes with checks of a fixed form (cf.~\cite{yu-an,coveringGraphs}) which allows us to relate larger families of codes with growing check range. 
In particular, all Abelian Two-Block Group-Algebra (A2BGA) codes~\cite{twoBlock, twoBlock2} with the same fixed check weight on left and right qubits are compactifications of a single higher-dimensional HGP fracton code. 
This includes all BB codes~\cite{Panteleev_2021_GB, BBCodes}, all CSS cubic codes~\cite{cubicCode}, and all fractal spin-liquid codes~\cite{fractalSpinLiquids}. 
The parent fracton code is derived from a HGP of simple translation-invariant classical fractal codes~\cite{PhysRevE.60.5068}. 

To derive our result, we employ Haah's polynomial formalism to describe the HGP and compactification via the imposition of twisted boundary conditions~\cite{commuting_pauli_hamiltonians}. 
Our result establishes that families of BB codes are local in a dimension that depends on the form of their checks. This dimension is between two, for fixed checks, and $(w-2)$ for increasingly long-range checks, where $w$ is the check weight of the BB code family.

We characterize the class of models that can be derived from the same high-dimensional HGP code as a \textit{fracton family tree}. We discuss three fracton family trees that arise for qubits, corresponding to polynomials over the field $\mathbb{F}_2$, and list notable examples of codes that belong to each family. 
In particular, we find that families of BB codes with increasingly long-range checks of a fixed weight are encompassed by these family trees. 

We extend results from Refs.~\cite{BBdistBounds} to establish distance bounds for the compactified codes, including those with increasingly long-range checks. 
These bounds are derived from the higher-dimensional local parent codes~\cite{BPTBound,BBdistBounds}, and we also discuss stricter bounds for codes with fixed check locality~\cite{yu-an}. 
We note that the QLDPC upper bounds of Ref.~\cite{checkQLDPCBounds} apply more generally than ours since they do not rely on translation invariance; however, as a consequence, they are not as tight. 
We then discuss potential logical gate restrictions on the compactified models following the results of Ref.~\cite{noGoHGP}. 
Finally, we propose a conjecture about the connection between the energy barrier of the parent code and its descendants. 

There are several related code constructions that involve higher dimensional mappings, which we now discuss. In Ref.~\cite{Hastings_manifolds} Hastings considered a family of generalized toric codes defined on high dimensional manifolds, where the dimension of the manifold increases with the number of qubits. It was shown that, assuming a geometric conjecture, this family of codes achieve logarithmic weight stabilizers and almost-linear distance scaling. 
Freedman and Hastings~\cite{FreedmanHastings} introduced a construction that takes in an arbitrary CSS code and outputs an 11-dimensional generalized surface code. Despite mapping qLDPC codes to higher dimensional topological codes, this construction is distinct from the one presented here. 
Another distinct line of work on higher dimensional HGP codes focuses on projecting hypergraph products of codes on fractal lattices into three dimensions~\cite{coredProduct, Brell_2016, Lin_2024}. 

The work is laid out as follows. 
In Section~\ref{sec:Background} we introduce relevant background material. 
In Section~\ref{sec:Compactification} we establish our central result connecting translation-invariant qLDPC codes and compactified fracton codes. 
In Section~\ref{sec:Distance} we discuss the distance of compactified HGP fracton codes. 
In Section~\ref{sec:logicalGates} we discuss restrictions on fault-tolerant logic gates for compactified HGP fracton codes. 
In Section~\ref{sec:Energy} we discuss the energy barriers of parent HGP fracton codes and their compactifications.
In Section~\ref{sec:Discussion} we conclude with a discussion of our results and future directions.

% ////// BACKGROUND ///////////////

\section{Background}
\label{sec:Background}

In this section we provide background on classical and quantum error correction, the notation we use for translation-invariant codes, and specific code families of interest.

\subsection{Classical Error Correction}
\label{subsec:classical}

A binary \textit{classical code} is a set of vectors which we call \textit{codewords}. The code is linear if any linear combination of codewords is also a codeword. If this is the case, the code can be specified using a \textit{parity-check matrix} $H$, where the codewords are vectors in the kernel of $H$. A familiar example is the [7,4,3]-Hamming code, which has the following parity-check matrix:
\begin{equation}
H = \begin{pmatrix}
    1 & 1 & 0 & 1 & 1 & 0 & 0 \\
    1 & 0 & 1 & 1 & 0 & 1 & 0 \\
    0 & 1 & 1 & 1 & 0 & 0 & 1
\end{pmatrix}.
\end{equation}

Observe that the matrix specifies the parity checks that the codewords need to satisfy in each row. Next, we define the quantum analogue of linear codes which can also be defined using a parity-check matrix $H$. 

\subsection{Stabilizer Formalism}
\label{subsec:stablizers}

\textit{Stabilizer codes}~\cite{gottesmanStab,calderbank1997quantum} are quantum analogues of linear codes. The stabilizer formalism defines quantum codes in terms of operators, as opposed to directly defining the codewords. We begin with the Pauli group $\cal{P}$, defined on $n$ qubits as the set of all possible Pauli strings with fourth-root-of-unity phases,
\begin{equation}
    \mathcal{P} = \{\sigma_1 \otimes \sigma_2 \otimes \cdot \cdot \cdot \otimes \sigma_n\},
\end{equation}
where \[\sigma_i \in \{\pm I, \pm iI, \pm X, \pm iX, \pm Y, \pm iY, \pm Z, \pm iZ\},\] where $X, Y, Z$ are the Pauli operators, and where $I$ is the identity. Then the \textit{stabilizer group} is an abelian subgroup of the Pauli group that does not include $-I$. This ensures that each stabilizer has $+1$ as an eigenvalue, and therefore the stabilizers have a joint +1 eigenspace. The codewords of the stabilizer code are then defined to be the vectors in this joint eigenspace.

All of the codes that we consider in this work have a stabilizer group that separates into two subsets, one with stabilizers with only Pauli $X$ terms and one with stabilizers with only Pauli $Z$ terms. Stabilizer codes of this form are called \textit{Calderbank-Shor-Steane} (CSS) codes~\cite{calderbank1996good,steane1996error,steane1996multiple}. We can equivalently think of CSS codes as combinations of two linear classical codes, where one code corrects for $X$-type errors, and another corrects for $Z$-type errors. However, to satisfy the requirement that all of the stabilizers commute, the classical codes must be orthogonal. This means that if we have two classical linear codes $C_1$ and $C_2$ with parity-check matrices $H_X$ and $H_Z$ respectively, they must satisfy
\begin{equation}
    H_XH_Z^{T} = 0.
\end{equation}
We can avoid this restriction by using the hypergraph product construction, which guarantees the orthogonality of $H_X$ and $H_Z$ for arbitrary classical linear codes.

% ////// HYPERGRAPH PRODUCT ///////////////

\subsection{Hypergraph Product Codes}
\label{subsec:HGP}
Suppose we have two classical linear codes with parity-check matrices $H_1$ and $H_2$. Then we define their \textit{hypergraph product} by the following equations. 
\begin{equation}
    H_X = [H_1 \otimes I \mid I \otimes H_2] \qquad H_Z = [I \otimes H_2^T \mid H_1^T \otimes I].
\end{equation}
We can check that this is a valid CSS code by verifying 
\begin{equation}
    H_XH_Z^{T} = H_1 \otimes H_2 + H_1 \otimes H_2 = 0
\end{equation}
since we are working in $\mathbb{F}_2$. Then the quantum parity-check matrix is written as follows: 
\begin{equation}
    H = \begin{pmatrix}
    0 & H_Z \\ 
    H_X & 0
\end{pmatrix}.
\end{equation}

In this work we consider the hypergraph product of translation-invariant classical codes defined on a hypercubic lattice. We associate the physical bits with vertices on the lattice, and specify the parity-checks by shaded regions connecting bits. We can then write the parity-check matrix by associating a basis vector to each bit. We can define a whole family of translation-invariant classical codes by considering the checks on an infinite lattice, and then changing the specific boundary conditions. The matrix notation becomes cumbersome to represent these classical families of codes, since the parity-check matrix can change drastically depending on the specific boundary conditions. As a result, we turn to the compact polynomial formalism used by Haah \cite{haahThesis}.

% ////// POLYNOMIAL FORMALISM ///////////////

\subsection{Polynomial Formalism}
\label{subsec:poly}
\begin{figure}
    \centering
    \begin{center}
    \begin{tikzpicture}
    \draw[thick] (0,0) -- (4,0) -- (4,4) -- (0,4) -- (0,0);
    \draw[thick] (2,4) -- (2,0);
    \draw[thick] (0,2) -- (4,2);
    \draw[thick] (0,4) -- (4,0);
    \draw[thick] (0,2) -- (2,0);
    \draw[thick] (2,4) -- (4,2);
    \draw [fill = gray, fill opacity=0.2, thick] (0,0) -- (0,2) -- (2,0) -- cycle;
    \draw [fill = gray, fill opacity=0.2, thick] (0,4) -- (0,2) -- (2,2) -- cycle;
    \draw [fill = gray, fill opacity=0.2, thick] (2,4) -- (2,2) -- (4,2) -- cycle;
    \draw [fill = gray, fill opacity=0.2, thick] (2,2) -- (2,0) -- (4,0) -- cycle;
    \filldraw[black] (0,0) circle (3pt)
      node[label={[label distance=2pt]below left:{$1$}}] {};
    \filldraw[black] (2,0) circle (3pt)
      node[label={[label distance=4pt]below:{$x$}}] {};
    \filldraw[black] (4,0) circle (3pt)
      node[label={[label distance=-2pt]below right:{$x^2$}}] {};
    \filldraw[black] (0,2) circle (3pt)
      node[label={[label distance=4pt]left:{$y$}}] {};
    \filldraw[black] (0,4) circle (3pt)
      node[label={[label distance=-2pt]above left:{$y^2$}}] {};
    \filldraw[black] (2,2) circle (3pt)
      node[label={[label distance=2pt]below left:{$xy$}}] {};
    \filldraw[black] (4,2) circle (3pt)
      node[label={[label distance=2pt]right:{$x^2y$}}] {};
    \filldraw[black] (2,4) circle (3pt)
      node[label={[label distance=2pt]above:{$xy^2$}}] {};
    \filldraw[black] (4,4) circle (3pt)
      node[label={[label distance=-1pt]above right:{$x^2y^2$}}] {};
  \end{tikzpicture}
\end{center}
    \caption{A classical code generated by the polynomial $1 + x + y$ (Newman-Moore Code). Here, the site with coordinates $jk$ is indexed by $x^j y^k$. The circles depict bits and the shaded triangles depict three body parity checks.}
    \label{lattice}
\end{figure}

We associate a monomial with each possible lattice coordinate such that multiplication by each variable corresponds to moving in a unique direction. For example, in Figure \ref{lattice}, we move through the lattice to the right by multiplying by $x$, and move up through the lattice by multiplying by $y$. We can formally define a family of codes by viewing the checks on an infinite lattice, and recover specific codes that belong to this family by imposing boundary conditions on the variables. These codes have checks on a finite lattice, and can be defined using cyclic permutation matrices, as described below.

Consider a classical code $\mathcal{C}$ defined on an $l \times m$ two-dimensional lattice such that $x^l = y^m = 1$. Then the correspondence with the matrix notation is given by
\begin{equation}
    x = S_l \otimes I_m \qquad y = I_l \otimes S_m    ,
    \label{xydef}
\end{equation}
where $S_j$ is an $j \times j$ cyclic permutation matrix such that $(S_j)^j = I_j$. It is easy to see that $x$ and $y$ commute since they have non-identity terms in different tensor factors. 

This definition of the variables in terms of cyclic permutation matrices can be readily generalized to higher dimensions. The \textit{generating polynomial} of a classical code is defined as the sum of the monomials involved with one of the translation-invariant checks. Note that this polynomial is not unique, as we can multiply by any monomial to get an equivalent generator. By convention, we usually choose the generating polynomial to be associated with the check nearest the origin. Then the parity-check matrix can be represented in polynomial notation by all of the shifts of the generator matrix. To illustrate this, the classical code in Figure \ref{lattice} has parity check matrix 
\begin{equation}
    H = x^{n_1}y^{n_2}(1 + x + y),
\end{equation}
for all $n_{1,2}$.
Notice that this polynomial description of the parity check matrix is the same for any of the codes in the same family, where specific codes are defined by conditions on the variables $x$ and $y$. 
\newline

We can then use this to rewrite the HGP construction using the polynomial formalism. Suppose we have two classical codes that are $m$ and $n$-dimensional with generating matrices $f(x_1,...,x_m)$ and $g(y_1,...,y_n)$ respectively. For compactness of notation, let 
\begin{equation}
    f(x_1,...,x_m) = f(\mathbf{x}) \quad \text{and} \quad g(y_1,...,y_n) = g(\mathbf{y}),
\end{equation}
and denote 
\begin{equation}
    \mathbf{x^e} = x_1^{e_1}\cdots x_m^{e_m} \qquad \mathbf{y^r} = y_1^{r_1}\cdots y_n^{r_n}.
\end{equation}
Following the definition given in the previous section, we can write 
\begin{subequations}
\begin{align}
    H_X &= [\mathbf{x^e}f(\mathbf{x}) \otimes I_{2} \mid I_{1} \otimes \mathbf{y^r}g(\mathbf{y})] \\
    H_Z &= [I_1 \otimes \mathbf{y^{-r}}\overline{g(\mathbf{y})} \mid \mathbf{x^{-e}}\overline{f(\mathbf{x})} \otimes I_2]
\end{align}
\end{subequations}
where the overline represents taking the inverse of each monomial in the polynomial. We observe that we can write $I_{1} = x_1^{e_1}...x_m^{e_m}(1)$ and $I_{2} = y_1^{r_1}...y_n^{r_n}(1)$. Then we can recast $H_X$ and $H_Z$ as follows:
\begin{subequations}
\begin{align}
    H_X &= [\mathbf{x^e}\mathbf{y^r}f(\mathbf{x}) \mid \mathbf{x^e}\mathbf{y^r}g(\mathbf{y})] \\
    H_Z &= [\mathbf{x^{-e}}\mathbf{y^{-r}}\overline{g(\mathbf{y})} \mid \mathbf{x^{-e}}\mathbf{y^{-r}}\overline{f(\mathbf{x})}].
\end{align}
\end{subequations}

Since we assume translation-invariance for the input classical codes, we can omit the shift terms and write the checks using the more compact notation given below. We note that this form is similar to the notation used by Yoshida for fractal spin liquids \cite{fractalSpinLiquids} 
\begin{equation}
    X\begin{pmatrix}
    f(\mathbf{x}) \\ g(\mathbf{y})
\end{pmatrix} \qquad Z\begin{pmatrix}
    \overline{g(\mathbf{y})} \\ \overline{f(\mathbf{x})}
\end{pmatrix}.
\label{eq:hgp_notation}
\end{equation}

Although the HGP construction only uses two generating polynomials, in general we can use arbitrarily many. In particular, $q$ generating Laurent polynomials in $D$ formal variables corresponds to a code defined in $D$ dimensions with $q$ qubits per site \cite{Haah_Lecture_Notes}.

% ////// CODES THAT EXHIBIT TOPOLOGICAL ORDER ///////////////

\subsection{Topological Codes}
\label{subsec:topological}

Next, we discuss a surprising connection between quantum error correction and condensed matter physics, using the toric code as an illustrative example~\cite{Kitaev_2003}. 
This motivates our exploration of exotic topological phases which can be modelled by quantum error correcting codes. 

We say that a system exhibits \textit{topological order} when any geometrically local operator that commutes with all stabilizers must itself be a product of stabilizers \cite{topoCondAlgo}. This implies that the codewords, which correspond to the ground states of the system in the condensed matter perspective, exhibit local indistinguishability.

The prototypical example of a QEC code that exhibits topological order is the toric code. This code can be defined on a 2D lattice with two qubits per vertex, and its stabilizer generators (\textit{a.k.a.} check operators, or check for short) are
%toric code
\begin{eqnarray}
\label{toricCode}
\begin{array}{c}
\plaquette{IX}{II}{XX}{XI}
\qquad\plaquette{IZ}{ZZ}{II}{ZI}
\end{array}
\, .
\label{unit_cell2D}
\end{eqnarray}
An equivalent definition in the polynomial formalism is
\begin{equation}
    X\begin{pmatrix}
    1 + x \\ 1 + y
\end{pmatrix} \qquad Z\begin{pmatrix}
    x + xy \\ y + xy
\end{pmatrix}.
\label{toricpoly}
\end{equation}
We can use these stabilizers to define the interaction terms of the following Hamiltonian: 
\begin{equation}
    \label{hamiltonian}
   H = -\sum_{j, k} \Big[ x^j y^k\begin{pmatrix}
       1 + x \\
       1 + y
   \end{pmatrix} + x^j y^k\begin{pmatrix}
       x + xy \\
       y + xy
   \end{pmatrix} \Big]
\end{equation}
The ground states of the Hamiltonian in Eq.~\ref{hamiltonian} correspond to the codewords of the toric code, and the pairs of excitations correspond to errors.

In this context, we define \textit{excitations} to be -1 eigenvalues of the stabilizers, that is, syndromes that correspond to an error. When viewed on a lattice, these excitations correspond to vertices on the boundary of an error path. A \textit{string operator} (or \textit{string segment}) is a Pauli operator with finite support whose excitations are contained in two boxes, each with finite width of linear size $w$ \cite[Definition 5.1]{haahThesis}. We define the length $l$ of the string operator to be the distance between the two boxes. Intuitively, this is an operator that can extend infinitely in one direction, while its width stays constant in the remaining directions (when defined on an infinite lattice). Some of the constructions we describe require the \textit{no strings} rule, which means that $l \geq \alpha w$ where $l$ is the length of the string and $\alpha$ is some constant, $\alpha \geq 1$.

% ////// FRACTON MODELS ///////////////

\subsection{Fracton Models}
\label{subsec:fracton}

We are interested in classifying \textit{fracton} models, which are local topological codes that model exotic fracton phases of matter. These phases of matter are characterized by having immobile isolated excitations, and composite excitations that are potentially mobile. As a result, each individual excitation is defined as having ``fractionalized mobility" \cite{fractonReview}. We consider fracton models that arise from exactly solvable lattice models in three, or more, dimensions. Within these, there are two types of models. Type-I models have fracton excitations, and composite excitations are only mobile in lower-dimensions. Type-II fracton models have no non-trivial mobile excitations \cite{fractonReview, fractonTypes, cubicCode}. 
This QEC formulation of fracton models allows for the application of quantum information theoretic ideas to further research in condensed matter physics, such as finding new topological phases of matter or studying quantum glassiness~\cite{Williamson_2016, Kim_2016, Prem_2017, Prem_2019, Williamson_2023, Bulmash_2019, Prem_2019_permutation_symmetries, Gorantla_2025, Devakul_2021}. 
Fracton models have implications in quantum information, such as for creating robust \textit{partially self-correcting} quantum memories~\cite{cubicCode, Michnicki_2014, Siva_2017, Williamson_2024, williamson2025partialselfcorrection, Gu2025}
and new decoders~\cite{Brown_2020, Schwartzman_Nowik_2025, Newman-MooreEnergyBarrier}. 

The most prominent fracton memory is Haah's cubic code \cite{cubicCode}. This code is a three-dimensional Type-II fracton model with two qubits per lattice site.
Equation~\ref{cubic1} shows how multiplication by variables is equivalent to translation in the polynomial formalism, while the checks of the code are depicted in Eq.~\ref{cubic2}.
%Cubic code
\begin{eqnarray}
\label{cubic1}
\begin{array}{c}
\drawgenerator{xz}{xyz}{x}{xy}{y}{1}{yz}{z}
\end{array}
\, .
\label{unit_cell}
\end{eqnarray}
\begin{align}
\label{cubic2}
\begin{array}{c}
\drawgenerator{XI}{II}{IX}{XI}{IX}{XX}{XI}{IX}
\qquad
\drawgenerator{ZI}{ZZ}{IZ}{ZI}{IZ}{II}{ZI}{IZ}
\end{array}
\, .
\end{align}
For example, the left figure of Eq.~\ref{cubic2} shows that the $X$-checks are defined on the first qubit by the polynomial $1 + xy + xz + yz$ and by $1 + x + y + z$ on the second qubit. 
Thus in the polynomial formalism, Haah's code has checks given by 
\[X\begin{pmatrix}
    1 + x + y + z \\
    1 + xy + xz + yz
\end{pmatrix} \qquad Z\begin{pmatrix}
    x + y + z + xyz \\
    xy + xz + yz + xyz
\end{pmatrix}.\] We observe that this code is not itself a hypergraph product, as the same variables appear in both the polynomials $f$ and $g$.

The motivating problem for this work is the diversity of fracton models and the lack of underlying principles that connect them. 
To resolve this, we use high-dimensional hypergraph products to encapsulate families of CSS fracton models. Notably, these parent codes incorporate both Type-I and Type-II models into the same family.

% ////// BB Codes ///////////////

\subsection{Bivariate Bicycle Codes}
\label{subsec:BB}

Using the definition of $x$ and $y$ given in Eq.~\eqref{xydef}, a BB code is defined by block matrices 
\begin{equation}
    A = A_1 + A_2 + A_3 \qquad B = B_1 + B_2 + B_3,
\end{equation}
where each $A_i$ and $B_i$ is a power of $x$ or $y$. Then we define 
\begin{equation}
 H_X = [A | B] \qquad H_Z = [B^{T} | A^{T}]. 
\end{equation}

An example of a BB code that is promising for near-term quantum computing is the \textit{gross code}, which is defined by matrices
\begin{equation}
    A = x^3 + y + y^2 \qquad B = y^3 + x + x^2.
\end{equation}
This code is promising for near-term experiments due to its low-weight checks, large distance ($d = 12$), and high ancilla-added~\cite{BBCodes} encoding rate ($r = 1/24$).

The BB codes fall into the more general category of Abelian Two-Block Group Algebra (A2BGA) codes, where the generating polynomials are not restricted to having exactly two variables and three terms, and are allowed to have mixed terms \cite{twoBlock}. 
Each polynomial is allowed to depend on the same set of variables, so these are not directly HGPs of two classical codes as in Eq.~\eqref{eq:hgp_notation}.
Our key observation is that the two code families are related by imposition of certain relations among the distinct variable sets of the HGP construction.

% ////// UNIFYING CODES ///////////////

\section{Unifying Translation-Invariant Codes}

\label{sec:Compactification}

In this section we discuss our method of mapping translation-invariant A2BGA codes to higher dimensional local HGP codes. We begin in Section \ref{subsec:compact} with a discussion of compactification, which allows us to recover lower dimensional codes from a high dimensional parent. Next, in Section \ref{subsec:verify} we discuss an algebraic criterion for when a code is indecomposable, since our unification method only works for these codes. In Section \ref{subsec:unification}, we present our method for unifying A2BGA codes into families with a common HGP parent. 
We conclude in Section \ref{subsec:families} with an explicit description of these families when working in $\mathbb{F}_2$.

\subsection{Compactification}
\label{subsec:compact}
We now introduce a method of dimension reduction to go from a high-dimensional HGP parent to a lower-dimensional descendant code. Suppose we have a family of models that all have the same translation-invariant stabilizers and varying size periodic boundary conditions. This means that the codes have the same parity-check matrix in the polynomial notation. Equations of the form $x_i^{l_i} = 1$ fix the size of the lattice which the code is defined on, and we can scale the code up or down by adjusting these equations. Then a higher-dimensional code can give rise to a lower-dimensional code if additional twisted boundary conditions are imposed, and we call this process \textit{compactification}~\cite{compactifyingFractons}. 

We impose twisted conditions by enforcing an equation that sets a product of the variables $x_i$ and their inverses to one, i.e. 
\begin{align}
    \prod_{i}x_i^{v_i} = 1
\end{align}
where $v_i$ are integers. 
Figure \ref{twisted} shows an example compactification of a 2D lattice to 1D, given by the equation $x^3 = y$. The compactified codes can alternatively be viewed as a balanced product, where each code arises by quotienting the HGP code by the polynomial representing the twisted conditions \cite{balancedProd}. 

In order to ensure that the lower-dimensional code has growing distance with the system size, it needs to have no local logical operators. This means that the parent model should not have logical operators that wrap around the compactified directions without spreading out in the other directions. In three dimensions, this corresponds to the code having no string-like operators in the compactified direction.

We note that while we consider compactifications of high dimensional HGP codes in this work, lower-dimensional fracton models can also be studied through properties of their compactifications. Specifically, Dua et al. study two-dimensional compactifications of three dimensional fracton models, finding differences in the topological order of the compactifications based on whether the parent code was Type-I or Type-II \cite{compactifyingFractons}. The topological mapping of BB codes from Ref.~\cite{yu-an} can also be understood through the lens of compactification. In this case a fixed compactification to a two dimensional code is used, along with potentially twisted boundary conditions on the two-dimensional code. 

\begin{figure}
    \centering
    \begin{tikzpicture}
    \draw[thick] (0,0) -- (4,0) -- (4,4) -- (0,4) -- (0,0);
    \draw[thick] (0,2) -- (4,2);
    \draw[thick] (0,4) -- (4,0);
    \draw[thick] (0,2) -- (2,0);
    \draw[thick] (2,4) -- (4,2);
    \draw [fill = gray, fill opacity=0.2, thick] (0,0) -- (0,2) -- (2,0) -- cycle;
    \draw [fill = gray, fill opacity=0.2, thick] (0,4) -- (0,2) -- (2,2) -- cycle;
    \draw [fill = gray, fill opacity=0.2, thick] (2,4) -- (2,2) -- (4,2) -- cycle;
    \draw [fill = gray, fill opacity=0.2, thick] (2,2) -- (2,0) -- (4,0) -- cycle;
    \filldraw[black] (0,0) circle (3pt)
      node[label={[label distance=2pt]below left:{$1$}}] {};
    \filldraw[black] (2,0) circle (3pt)
      node[label={[label distance=4pt]below:{$x$}}] {};
    \filldraw[black] (4,0) circle (3pt)
      node[label={[label distance=-2pt]below right:{$x^2$}}] {};
    \filldraw[black] (0,2) circle (3pt)
      node[label={[label distance=4pt]left:{$y$}}] {};
    \filldraw[black] (0,4) circle (3pt)
      node[label={[label distance=-2pt]above left:{$y^2$}}] {};
    \filldraw[black] (2,2) circle (3pt)
      node[label={[label distance=2pt]below left:{$xy$}}] {};
    \filldraw[black] (4,2) circle (3pt)
      node[label={[label distance=2pt]right:{$x^2y$}}] {};
    \filldraw[black] (2,4) circle (3pt)
      node[label={[label distance=2pt]above:{$xy^2$}}] {};
    \filldraw[black] (4,4) circle (3pt)
      node[label={[label distance=-1pt]above right:{$x^2y^2$}}] {};
    \draw[->, thick, red, >=stealth] 
    (4,0.1) .. controls (4.5,1) and (4.5,1) .. (0,2.1);

    \draw[->, thick, red, >=stealth]
        (4,2.1) .. controls (4.5,3) and (4.5,3) .. (0,4.1);
    
    \end{tikzpicture}
    \caption{
    Twisted boundary conditions on a 2D lattice given by the equation $x^3 = y$. We call the general procedure of applying twisted boundary conditions ``compactification''; it can be thought of as a special case of a balanced product \cite[Fig. 7]{balancedProd}.
    }
    \label{twisted}
\end{figure}

\subsection{Required Model Properties}
\label{subsec:verify}

In order for the unification method described in the next section to work, we need to first ensure that the original code is indecomposable. We say that a code is \textit{decomposable} if there exists a choice of generating stabilizers that are supported on two distinct subsets of qubits. In the context of translation-invariant codes, decomposable codes decouple into codes on multiple sublattices.
We next present an algebraic criterion for whether an A2BGA code is indecomposable. 

Suppose we have a model generated by the polynomials $f$ and $g$ with variables $\{x_i\}_{i=1,\dots,d}$. Consider the group $G$ generated by the monomials that appear in $f$ and $g$ (along with their inverses and 1). We call this the \textit{monomial group}. Then we have the following necessary and sufficient condition for whether a model is decomposable. 
\begin{theorem}
     An A2BGA code $\mathcal{C}$ with monomial group $G$ is indecomposable if and only if $G=\langle \{x_i\}_{i=1,\dots,d} \rangle$. 
\end{theorem}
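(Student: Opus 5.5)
The plan is to read the A2BGA code as a code on the Abelian group $\Lambda = \langle x_1,\dots,x_d\rangle$, with the relations imposed by the boundary conditions. There are two qubits at each group element $h\in\Lambda$, a left one and a right one. The $X$-check at $h$ acts on left qubits $h m$ for monomials $m$ in $f$ and on right qubits $h m'$ for monomials $m'$ in $g$. The $Z$-checks use $\bar g,\bar f$ in the same way. We then build the bipartite connectivity graph $\Gamma$ whose vertices are all qubits and all checks, with an edge whenever a check acts on a qubit. The claim to prove is that the code is indecomposable exactly when $\Gamma$ is connected. Using that reformulation, indecomposability becomes a statement about the cosets of $G$ in $\Lambda$.

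\emph{Reduction to cosets.} First I will show that every check and every qubit adjacent to a vertex labelled by $h$ carries a label in $hG$. This holds because all the shifts involved are monomials of $f$, $g$, or their inverses. Conversely, I will show that all vertices with labels in a single coset $hG$ lie in one connected component. The mechanism is as follows. Fix a monomial $m$ in $f$. The left qubit $h$ is touched by the $X$-check at $hm^{-1}$, and that check also touches the left qubit $hm^{-1}m'$ for any other monomial $m'$ in $f$. It also touches the right qubits $hm^{-1}n$ for monomials $n$ in $g$. Composing such steps moves us by differences of monomials, $m'm^{-1}$ and $n m^{-1}$, along with their inverses. To reach every generator of $G$, and not only these ratios, I will normalize $f$ and $g$ to contain the monomial $1$, as the paper's convention of multiplying by a monomial allows. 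Then the ratios include every monomial, and the single steps already generate $G$. The $Z$-checks connect left to right in the same way. Together these show that the connected components of $\Gamma$ are exactly the sets indexed by $\Lambda/G$.

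\emph{Both directions.} If $G=\Lambda$, then $\Gamma$ is connected. Suppose a generating set of stabilizers split into two families supported on disjoint nonempty qubit sets $Q_1\sqcup Q_2$. Then I need every check, and not only the chosen generators, to respect that split. The hard step is exactly this: decomposability is defined with respect to \emph{some} choice of generators, while $\Gamma$ is built from the standard translates. I would argue this with supports. Each standard check is a product of generators from the two families, so it factors as $S=S_1S_2$ with $S_i$ supported in $Q_i$. I then need the standard checks to be "irreducible" in this sense, or else I need a direct argument that the stabilizer group splits as a direct product. Either would force each connected component of $\Gamma$ to lie within one $Q_i$, which contradicts connectivity.

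I would settle the hard step by working with the stabilizer group itself rather than with generators. A decomposition means $\mathcal S=\mathcal S_1\times\mathcal S_2$ with $\mathrm{supp}\,\mathcal S_i\subseteq Q_i$. A qubit in no check's support would be a trivial tensor factor, and this cannot occur because every qubit is touched. It remains to show that such a splitting forces a cut of $\Gamma$. Consider an edge of $\Gamma$ linking $Q_1$ and $Q_2$ through a check $S$, so $S=S_1S_2$ with both factors nontrivial. I would then use the commutation relations of $S_1$ with the opposite-type checks to derive a contradiction, a step I expect to need care and may require assuming a minimality property of the translate checks. Conversely, if $G\ne\Lambda$, the cosets give a nontrivial partition $Q=\bigsqcup_{c\in\Lambda/G} Q_c$, and each standard check lies inside a single $Q_c$. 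Taking $Q_1=Q_c$ for one coset and $Q_2$ for the rest exhibits the code as decomposable, which is the easy direction.
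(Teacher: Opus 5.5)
Your reduction of the components of $\Gamma$ to cosets of $G$ (after normalizing $1\in f$, $1\in g$) is correct and matches the paper's argument. So is the direction $G\neq\Lambda\Rightarrow$ decomposable. The gap is the converse, which you leave open, and the mechanism you propose cannot close it. In a splitting $\mathcal S=\mathcal S_1\mathcal S_2$ with $\mathrm{supp}\,\mathcal S_i\subseteq Q_i$, the factor $S_1=S|_{Q_1}$ of a check is itself an element of $\mathcal S$. It therefore commutes with every check, and commutation relations yield no contradiction. What is actually needed is the minimality property you mention: no nonzero $X$-type stabilizer $(af\,|\,ag)$ is supported strictly inside the support of a single check $(f\,|\,g)$, and likewise for $Z$. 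The paper's own proof does not supply this either. It asserts that decomposability is equivalent to some generating set having a disconnected Tanner graph, and then examines only the standard Tanner graph. That is precisely the step you flagged.

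On the infinite lattice the minimality property holds and finishes your argument. The ring $\mathbb F_2[x_1^{\pm1},\dots,x_d^{\pm1}]$ is a domain, and Newton polytopes add: $\mathrm{Newt}(af)=\mathrm{Newt}(a)+\mathrm{Newt}(f)$. Hence $\mathrm{supp}(af)\subseteq\mathrm{supp}(f)$ with $a\neq0$ forces $a=\mathbf x^{\mathbf e}$. A finite subset of $\mathbb Z^d$ cannot be translated into itself unless $\mathbf e=0$, so $S|_{Q_1}\in\{1,S\}$ and every check lies inside one $Q_i$. Connectivity of $\Gamma$ then places all of $\mathcal S$ inside one $Q_i$.

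On a finite torus, by contrast, minimality fails and the implication $G=\Lambda\Rightarrow$ indecomposable is false, so no argument can close the gap there without extra hypotheses. Take a unit $u$ of $\mathbb F_2[\Lambda]$. Then $(uf_0,ug_0)$ defines the same stabilizer group as $(f_0,g_0)$, since both check spans are unchanged ($\bar u$ is also a unit), but it can have a larger monomial group. Concretely, on $\Lambda=\mathbb Z_8$ take $u=1+x+x^2$, $f_0=1+x^2+x^4$, $g_0=1+x^2$. Here $u$ is a unit because $\mathbb F_2[x]/(x^8-1)=\mathbb F_2[x]/((x+1)^8)$ is local and $u(1)=1$. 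Then $f=1+x+x^3+x^5+x^6$ and $g=1+x+x^3+x^4$ have $G=\Lambda$. Yet the code decouples into independent codes on the even and odd sites.
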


\begin{proof}
We first note that a code is decomposable if and only if there exists a choice of stabilizer generators such that the Tanner graph is disconnected. This is because the Tanner graph consists of multiple connected components exactly when there exists a choice of stabilizer generators that can be divided into two or more sets which each have support on distinct sets of qubits. Therefore, a code is indecomposable if any choice of stabilizer generators produces a connected Tanner graph. 

We now show that a given Tanner graph of $\mathcal{C}$ is connected if and only if the monomial group condition holds. Let $\mathcal{C}$ be generated by polynomials $f, g$ and let $m_i$ denote the monomials of $f$, and $n_j$ denote monomials of $g$. Also assume that $f$ and $g$ are of the form $1 + ...$ so that $1 \in \{m_i\}$ and $1 \in \{n_j\}$. Similarly to Lemma 3 in Ref \cite{BBCodes}, consider the set $S = \{m_i m_j^{-1}\} \cup \{n_k n_l^{-1}\}$ where $i, j$ are any term index in $f$ and $k, l$ are any term index in $g$. Suppose $S$ generates the entire lattice. Let $\alpha$ be any left qubit in the lattice. Then $\alpha$ is connected in the Tanner graph of $\mathcal{C}$ to some left qubit $\alpha \cdot m_i \cdot m_j^{-1}$ through an $X$ check node. Since we supposed that $S$ generates the entire lattice, $\alpha$ is connected to any other left qubit in the lattice. Similarly, all of the right qubits are connected through $Z$ check nodes. We know that each $X$ check and $Z$ check is connected to a left and right qubit, so the Tanner graph must be connected. Since paths between nodes in the Tanner graph are defined as multiplications by elements of $S$, if the Tanner graph is connected then $S$ generates the entire lattice. 

We now show that $S = G$. Suppose that $m \in G$ and that $m$ appears as a monomial in $f$ or $g$ (not its inverse). Then $m \cdot 1^{-1} \in S$. If $m$ is the inverse of a monomial appearing in $f$ or $g$, then $1 \cdot (m^{-1})^{-1} \in S$. Conversely, we know that any element in $S$ is a product of monomials appearing in $f$ or $g$ and must lie in $G$. Therefore, the Tanner graph is connected if and only if $G$ generates the entire lattice. Algebraically, this means that $G$ must be generated by the unique variables (\(x\), \(y\), \(z\), etc.) considered as degree-one monomials.

\end{proof}

To check whether a model meets the above condition, we can use the following steps. 
Suppose some variable $v$ in $f$ and $g$ appears as a degree-one monomial term. Then we can ignore all instances of $v$, and examine the polynomials without it. By repeating this process, suppose there are no variables that appear as degree-one monomials in the two polynomials. If we are left with two constant polynomials, then this means that each unique variable appeared as a degree-one monomial in $f$ and $g$, so we are done. If not, then we are left with sums of monomials that are products of the unique variables. Suppose that there do not exist two terms $\alpha$, $\beta$ that are powers of monomials in $f$ and $g$ that differ by exactly one degree-one variable, that is, $\alpha\beta^{-1} \neq u$ for all pairs of terms $(\alpha, \beta)$ and where $u$ is some unique variable that appears in $f$ and/or $g$. Then this means that there is no way to combine powers of the terms to get any of the remaining unique variables, so the model must be decomposable. For example,  consider
\begin{subequations}
\begin{align}
    f(x, y, z) &= 1 + x^{-1}y + x^{-1}z \\
    g(x, y, z) &= 1 + z^{-1}y~,
\end{align}
\end{subequations}
then the model has no checks involving terms of the form $x^i$, $y^i$, and $z^i$ where $i$ is odd. 

If we know that there exist two terms $\alpha, \beta$ that are powers of monomials in $f$ and $g$ that differ by only one degree-one monomial, then we can represent that variable (or its inverse) by $\alpha\beta^{-1}$. Then this means that we can ignore this variable, and inductively check if the condition holds for the resulting polynomials. If by repeating this process the polynomials become constant, then the monomial condition has been met.

\subsection{Unification via a parent model}
\label{subsec:unification}

Suppose we have an indecomposable fracton model of the form \[X\begin{pmatrix}
    f(\mathbf{x}) \\ g(\mathbf{x})
\end{pmatrix} \qquad Z\begin{pmatrix}
    \overline{g(\mathbf{x})} \\ \overline{f(\mathbf{x})}
\end{pmatrix}~.\]
Then without loss of generality, by the equivalence properties of translation invariant codes on infinite lattices described in Ref.~\cite{fractonEquiv}, we can write $f = 1 + f'$ and $g = 1 + g'$ for some other polynomials $f',g'$. 
Let $\mathbf{x} = (x_1,...,x_d)$, and define $d$ to be the number of unique variables across $f'$ and $g'$. Suppose that $f'$ and $g'$ have $k_1$ and $k_2$ monomial terms respectively, 
\begin{align}
    f'= m_1^{(f)} + m_2^{(f)}  + \cdots + m_{k_1}^{(f)} ,
    \\
    g'= m_1^{(g)} + m_2^{(g)} + \cdots + m_{k_1}^{(g)}.
\end{align}

First we consider the case where there are strictly more variables than monomial terms. In this case the model must decompose into decoupled models. This is because the lattice generated by the monomials is lower dimensional that the lattice generated by all the variables. We focus on codes that have already been reduced to an indecomposable form, and hence the number of variables is equal to or less than the number of monomial terms.

Now we consider the generic case of an indecomposible code on an infinite lattice where there are more terms than variables, that is, $d \leq k_1 + k_2$. Consider the HGP of the two classical codes generated by the following polynomials 
\begin{align}
    1 + a_1 + ... + a_{k_1}, \qquad 1 + b_1 + ... + b_{k_2},
\end{align}
where the $a_i$ and $b_j$ are newly introduced independent variables. 

Conceptually, we want to map $d$ of the $a_i$ and $b_i$ terms to the monomials in $f$ and $g$ that appear at the same index $i$. Since the monomial group of $f'$ and $g'$ is generated by the unique variables, we can write expressions for each of the unique variables in $f'$ and $g'$ as products of elements of the monomial group. We do the change of variable that takes $a_i$ to the $i$-th monomial of $f$ and $b_i$ to the $i$-th monomial of $g$ for the monomials that appear in these equations, yielding $d$ expressions of the form
\begin{align}
    a_i = m^{(f)}_i , \qquad
    b_j = m^{(g)}_j .
\end{align}
We can use these expressions to impose twisted boundary conditions on the remaining monomials that do not appear in these expressions. Therefore using these twisted conditions, we can rewrite the two polynomials $f$ and $g$ in terms of $v$ total variables. 

We conclude this section with an example that illustrates how to add twisted boundary conditions and do a change of variable to go from a high-dimensional HGP to a specific compactified model. 

\begin{example}
    Let $k_1 = 3$ and $k_2 = 2$. Then we begin with the product of the classical codes generated by \[1 + a_1 + a_2 + a_3 \qquad 1 + b_1 + b_2.\]

 Suppose we want to impose twisted boundary conditions on our classical code to realize the following fracton model:
 \[X \begin{pmatrix} 1 + xy + x^2y + y^3 \\ 1 + xz + z^2  \end{pmatrix} \quad Z \begin{pmatrix} \overline{1 + xz + z^2} \\ \overline{1 + xy + x^2y + y^3}  \end{pmatrix}.\]

We first check that the monomials of $f(x, y)$ and $g(x, z)$, along with their inverses and 1, form a group that contains $x, y, z$ respectively. We know that 
\begin{equation*}
   (xy)^{-1}(x^2y) = x, 
\end{equation*}
\begin{equation*}
    (xy)^2(x^2y)^{-1} = y,   
\end{equation*}
and 
\begin{equation*}
    (xy)(x^2y)^{-1}(xz) = z.   
\end{equation*}
Next, we do the change of variable that sends 
\begin{equation*}
    a_1 = xy \qquad a_2 = x^2y \qquad b_1 = xz.
\end{equation*}
Finally, we can use these expressions to impose the following twisted boundary conditions:
\begin{equation*}
    a_3 = (a_1^2a_2^{-1})^3 \qquad b_2 = (a_1a_2^{-1}b_1)^2.   
\end{equation*}
\end{example}

\subsubsection{Generalization to other translation-invariant qLDPC codes}

The above approach generalizes to an arbitrary infinite translation-invariant qLDPC stabilizer code with fixed check weights. This is because we can specify the stabilizers of these codes using the symplectic polynomial formalism as a matrix of generating polynomials, where each row corresponds to the $X$ or $Z$ support on a qubit position in the unit cell, and each column specifies the checks associated to the unit cell~\cite{haahThesis,toricStructureThm}.
We can then use the algebraic condition introduced above to check whether the code is indecomposable, by considering all of the terms across the polynomials in the matrix. The approach described previously generalizes directly to when there are more than two generating polynomials, by rewriting each $f(\mathbf{x})$ as $f = 1 + f'(\mathbf{x})$ and introducing a new variable for each term. We then use the code's indecomposability property to create the necessary twisted boundary conditions to recover the given qLDPC code from the high dimensional HGP parent.

\subsection{Fracton Family Trees}
\label{subsec:families}

When viewed as compactified fracton models, the A2BGA codes fall into three types of families over the ambient field $\mathbb{F}_2$. We call these \textit{fracton family trees}. Figure \ref{fig:tree} shows these families with some selected descendant codes, primarily taken from the classification paper Ref.~\cite{sorting}. The given HGP codes are the smallest representatives of a larger collection of HGP codes that unify the selected descendant models. The main difference of note between the different models is the parity of the length of the generating polynomials. This leads to the following proposition.

\begin{prop}
    Two A2BGA codes over \(\mathbb{F}_2\) are in the same fracton family tree if and only if the lengths of their generating polynomials have the same combination of parities.
\end{prop}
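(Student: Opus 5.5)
The plan is to prove the two directions separately. The forward direction rests on a compactification invariant: evaluating a generating polynomial at $\mathbf{x}=\mathbf{1}$. The reverse direction rests on the observation that over $\mathbb{F}_2$ a single twisted condition can shorten a generating polynomial by exactly two terms. Throughout, I take two codes to be in the same fracton family tree when both arise, via the procedure of Section~\ref{subsec:unification}, by compactifying a common HGP parent built from $1+a_1+\dots+a_{k_1}$ and $1+b_1+\dots+b_{k_2}$. The parities of the two lengths are treated as an unordered pair, since swapping the two blocks exchanges the roles of $f$ and $g$.

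\emph{Invariance of parity (``only if'').} For a Laurent polynomial $h$ over $\mathbb{F}_2$, the augmentation $\varepsilon(h)=h(1,\dots,1)\in\mathbb{F}_2$ equals the number of terms of $h$ mod $2$, after cancellations. I will check that every operation used to pass from a parent to a descendant preserves $\varepsilon$:
\begin{itemize}
\item multiplying by a monomial (the equivalence used to put $f$ in the form $1+f'$);
\item taking $\overline{h}$;
\item changes of variables sending each variable to a monomial;
\item imposing a twisted condition $\prod_i x_i^{v_i}=1$, i.e.\ passing to the quotient ring.
\end{itemize}
Each of these commutes with evaluation at $\mathbf{1}$, because every monomial evaluates to $1$ on both sides. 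For the quotient, the ring map $\mathbb{F}_2[\mathbf{x}^{\pm}]\to\mathbb{F}_2$ sending all variables to $1$ factors through any quotient by relations of the form (monomial) $=1$. Hence both descendants inherit the parities $\varepsilon(1+a_1+\dots+a_{k_1})=k_1+1$ and $k_2+1$ mod $2$ of their common parent. So two codes in the same tree have the same parity combination.

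\emph{Connecting codes of equal parity (``if'').} Let two A2BGA codes have polynomial lengths $(n_1,n_2)$ and $(n_1',n_2')$ with $n_1\equiv n_1'$ and $n_2\equiv n_2'$ mod $2$, after possibly swapping blocks. By the unification construction, each code is a compactification of the HGP parent $P(n_1,n_2)$, respectively $P(n_1',n_2')$, where $P(n_1,n_2)$ is built from polynomials with $n_1$ and $n_2$ terms. This step uses that the codes are indecomposable, so that the earlier theorem supplies the needed monomial relations.

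It then suffices to show that $P(n+2,m)$ compactifies to $P(n,m)$. To do so, impose the twisted condition $a_{n+1}a_n^{-1}=1$ on the first factor of $P(n+2,m)$. In characteristic $2$ this gives
\[
1+a_1+\dots+a_{n-1}+a_n+a_n = 1+a_1+\dots+a_{n-1},
\]
which is the $n$-term generator. The second factor is unchanged, so the resulting code is $P(n,m)$, possibly with a decoupled trivial direction that can be dropped or kept as a stacking. Iterating this step, $P(N_1,N_2)$ with $N_i=\max(n_i,n_i')$ compactifies to both $P(n_1,n_2)$ and $P(n_1',n_2')$. Since a composition of compactifications is again a compactification (the twisted relations simply accumulate), both codes descend from $P(N_1,N_2)$.

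The main obstacle is the last step: making ``compactification'' and ``same family tree'' precise enough that composing twisted conditions and dropping the degenerate direction created by $a_{n+1}=a_n$ is legitimate. Concretely, the quotient by $a_{n+1}a_n^{-1}$ makes one lattice direction act trivially on the code, so the quotient code is $P(n,m)$ tensored with a cycle, i.e.\ stacked copies of it. I would handle this by first applying a further relation $a_{n+1}=1$ to kill that direction, or by defining the tree up to such stackings. I would also check that the smallest representatives listed in Fig.~\ref{fig:tree} are exactly the base cases of this reduction, one for each of the three parity combinations.
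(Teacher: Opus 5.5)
Your proposal is correct and follows the paper's reasoning. The augmentation $\varepsilon$ is a clean formalization of the paper's remark that compactification over $\mathbb{F}_2$ can only remove terms by making them coincide in pairs. Your reduction $P(n+2,m)\to P(n,m)$ (with $a_n=a_{n+1}=1$ to remove the spurious decoupled direction) spells out what the paper leaves implicit when it uses a parent with the maximal weights $w_1,w_2$ and pairwise cancellations such as $d=a$, $e=a$ in Table~\ref{fracton_tree_table}.

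One small correction to your closing remark, which does not affect the proof: the parents shown in Fig.~\ref{fig:tree} are not the base cases of your reduction. For example, the even--even parent has weights $8$ and $6$, chosen to accommodate HHB-A, rather than the minimal $2$ and $2$.
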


This splitting occurs since there is no way to change the parity of the lengths of the polynomials through compactification in $\mathbb{F}_2$. In the compactification process we can only eliminate terms from each of the parent polynomials by setting pairs of them to be equal. This means that we can never reduce the length of the generating polynomials by an odd amount, preserving the length's parity through the compactification process.

\begin{figure}[t]
\begin{subfigure}{\linewidth}
    \vspace{0.5cm}
    \begin{center}
        \begin{adjustbox}{max width=\columnwidth}
    \begin{tikzpicture}[
        level distance=1.4cm, % vertical spacing
        level 1/.style={sibling distance=2.2cm}, % horizontal spacing
        every node/.style={align=center}
    ]
        \node {
        $\begin{pmatrix}
            1 + a + b + c + d + e + f + g \\
            1 + h + i + j + k + l
        \end{pmatrix}$        
        }
            child {node {Haah}}
            child {node {Checkerboard}}
            child {node {HHB-A}}
            child {node {Fibonacci fractal\\spin liquid}};
    \end{tikzpicture}
    \end{adjustbox}
    \end{center}
    \vspace{-0.2cm}
    \caption{Family of codes with HGP parent generated by two even length polynomials.}
    \label{evens}
\end{subfigure}

\begin{subfigure}{\linewidth}
\vspace{0.5cm}
\begin{center}
    \begin{adjustbox}{max width=\columnwidth}
\begin{tikzpicture}[
    level distance=1.4cm, % vertical spacing
    level 1/.style={sibling distance=3.5cm}, % horizontal spacing
    every node/.style={align=center}
]
    \node {
    $\begin{pmatrix}
        1 + a + b \\
        1 + c + d
    \end{pmatrix}$        
    }
        child {node {BB codes}}
        child {node {Honeycomb color code}}
        child {node {fractal spin liquids}};
\end{tikzpicture}
\end{adjustbox}
\end{center}
\vspace{-0.2cm}
\caption{Family of codes with HGP parent generated by two odd length polynomials.}
\label{odds}
\end{subfigure}

\begin{subfigure}{\linewidth}
\vspace{0.5cm}
\begin{center}
    \begin{adjustbox}{max width=\columnwidth}
\begin{tikzpicture}[
    level distance=1.4cm, % vertical spacing
    level 1/.style={sibling distance=3.5cm}, % horizontal spacing
    every node/.style={align=center}
]
    \node {
    $\begin{pmatrix}
        1 + a + b \\
        1 + c
    \end{pmatrix}$        
    }
        child {node {Serpinski prism model}}
        child {node {fractal spin liquids}};
\end{tikzpicture}
\end{adjustbox}
\end{center}
\vspace{-0.2cm}
\caption{Family of codes with HGP parent generated by one even and one odd length polynomial.}
\label{evenodd}
\end{subfigure}

\caption{The three fracton family trees that arise in $\mathbb{F}_2$, defined by the parities of the lengths of the two generating polynomials of the HGP parent.
}
\label{fig:tree}
\end{figure}

Note that in Figure \ref{fig:tree}a we have one weight-eight polynomial and one weight-six polynomial. This is because the HHB-A model has generating polynomials 
    \begin{eqnarray}{l}
        f = 1 + x + y + z + xy + yz + xz + xyz, \\
        g = 1 + x^{-1}y + yz + xy + y^2 + yz^{-1}.
    \end{eqnarray}
In general, to find the explicit HGP generalization for codes in the same fracton family, we need generating polynomials with weights $w_1$ and $w_2$, where $w_1$ is the maximum number of terms in $f$ across all of the codes in the family, and $w_2$ is the maximum number of terms across the $g$ polynomials. To see the explicit twisted conditions needed to reduce the parent codes to the children shown in Figure \ref{fig:tree}, see Appendix \ref{appendix}.

% ////// DISTANCE BOUNDS /////////////////

\section{Code Distance Bounds}

\label{sec:Distance}

Next, we review results that provide upper and lower bounds on the distance scaling of the compactified codes. These bounds work generally for any A2BGA code, which includes many CSS fracton models and the BB codes.

In Ref.~\cite{BBdistBounds}, upper bounds for the distance scaling of a subset of A2BGA codes called generalized bicycle codes are derived. 
This was achieved by showing that they are equivalent to codes local in $D$ dimensions, and then using a generalized version of the Bravyi-Poulin-Terhal (BPT) bound \cite{BPTBound}.
We proceed similarly to derive the following upper bounds on the scaling of the compactifications. 

\begin{theorem}
\label{noDecouple}
    Suppose $\mathcal{C} = [[n, k, d]]$ is a family of abelian two-block group algebra codes defined by polynomials $f$ and $g$ with fixed weight $w$. Suppose that $\mathcal{C}$ is indecomposable, and that $v \leq w - 2$, where $v$ is the number of unique variables. Then we can map this code to one local in $D \leq w - 2$ dimensions, and the distance scaling of $\mathcal{C}$ has the following upper bound: \[d \leq O(n^{1 - (1/D)})\] with the parameters satisfying \[kd^{2/(D - 1)} \leq O(n).\]
\end{theorem}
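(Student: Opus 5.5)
The plan is to follow the Wang--Pryadko strategy for generalized bicycle codes and apply it to the A2BGA setting through the parent hypergraph product of Section~\ref{subsec:unification}. Normalize $f=1+f'$ and $g=1+g'$ using the equivalence of translation-invariant codes. Together $f'$ and $g'$ then carry $w-2$ monomials $m^{(f)}_1,\dots,m^{(f)}_{k_1},m^{(g)}_1,\dots,m^{(g)}_{k_2}$, with $k_1+k_2=w-2$. By Theorem~1, indecomposability means these monomials generate the full lattice $\mathbb{Z}^v$ of translations in the variables $x_1,\dots,x_v$.

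The finite code $\mathcal{C}$ is defined over a finite abelian group $A$, a quotient of $\mathbb{Z}^v$ by its boundary conditions. It is therefore also a quotient of $\mathbb{Z}^{w-2}$ via the surjection
\[
\phi:\ \mathbb{Z}^{w-2}\to A,\qquad a_i\mapsto m^{(f)}_i,\quad b_j\mapsto m^{(g)}_j .
\]
Let $K=\ker\phi$, a full-rank sublattice of $\mathbb{Z}^{w-2}$. The code $\mathcal{C}$ is exactly the hypergraph product parent $X(1+\sum a_i,\,1+\sum b_j)$ with periodic (twisted) boundary conditions given by $K$. That is, it is the parent code on the torus $\mathbb{R}^{w-2}/K$. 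In this picture every check is geometrically local: it acts on unit-distance neighbours of a cubic lattice.

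The next step is to make the torus metrically reasonable. Apply lattice basis reduction (Minkowski/LLL) to $K$ and change coordinates of $\mathbb{Z}^{w-2}$ by a unimodular map. Checks stay local with bounded range, because the generators $a_i,b_j$ go to bounded vectors only if the reduced basis is not too skewed.

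Here I expect the main obstacle. Applying the BPT bound in its generalized form for $D$-dimensional local codes on a torus (the version used in Ref.~\cite{BBdistBounds}) needs the fundamental domain to have no extremely short directions. Otherwise $n$ is not comparable to $L^D$ with bounded interaction range.

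My approach is to set $D$ equal to the number of ``long'' directions of the reduced basis of $K$. Short directions, those of bounded length, are absorbed into the unit cell by grouping a bounded number of sites into one supersite. This gives a code local in $D\le w-2$ dimensions with a bounded number of qubits per site, on a torus whose remaining side lengths are all unbounded. Handling the intermediate regime is the delicate case. There some side lengths grow but much more slowly than others, and one must show the generalized BPT bound still applies with the worst-case (largest) side controlling the bound. I would reuse Wang--Pryadko's argument for this case, since their proof of the GB case tolerates anisotropic tori.

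With locality in $D$ dimensions established, invoke the BPT bound $d=O(L^{D-1})$ and the tradeoff $kd^{2/(D-1)}=O(n)$ for local stabilizer codes in $D$ dimensions. For the roughly isotropic case, $n=\Theta(L^D)$ gives $d=O(n^{1-1/D})$ directly. For anisotropic tori, use the torus version of the bound, where $d$ is bounded by the volume divided by the longest side. This quantity is at most $n^{1-1/D}$, so the scaling still holds.

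The hypothesis $v\le w-2$ guarantees the map $\phi$ exists with $w-2\ge v$ parent dimensions and that $\mathcal{C}$ is indeed a descendant. This completes the plan.
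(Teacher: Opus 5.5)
Your proposal is correct and follows the paper's route. You realize $\mathcal{C}$ as the $(w-2)$-dimensional hypergraph-product parent with generating polynomials $1+\sum_i a_i$ and $1+\sum_j b_j$, placed on the twisted torus $\mathbb{Z}^{w-2}/K$, and then apply the generalized BPT bounds, which is exactly the paper's two-line argument. Your treatment of skewed and anisotropic tori goes beyond the paper, which only asserts that BPT applies under arbitrary twisted boundary conditions. Two remarks on it: the slab argument bounding $d$ by the volume over the largest spacing of lattice hyperplanes of $K$ needs no unimodular change of coordinates, while the tradeoff $kd^{2/(D-1)}=O(n)$ on strongly anisotropic tori is left unjustified both by you and by the paper.
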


\begin{proof}
    Section \ref{subsec:unification} details how to map any A2BGA code to a higher dimensional HGP parent code that is local in $w - 2$ dimensions. Since the child code has distance scaling upper bounded by the scaling of the parent code when considered on arbitrary twisted boundary conditions, we can apply the BPT bound to the HGP to achieve the desired result.
\end{proof}

If $\mathcal{C}$ is a family of \textit{local} A2BGA codes, we can apply the BPT bound directly to get $D$ equal to the number of unique variables that appear across $f$ and $g$ in Theorem \ref{noDecouple}.

In the theorem above, the \textit{weight} of a code written in the polynomial formalism is the total number of terms across the generating polynomials. If we have a family of decomposable codes defined on an infinite lattice, we can examine the distance bounds for each subcode separately. 
We can then map the indecomposable subcodes to HGP codes local in higher dimensions and apply Theorem \ref{noDecouple} to each subcode separately.

We remark that the above theorem applies similarly to more general translation-invariant qLDPC codes. This follow from the discussion at the end of subsection~\ref{subsec:unification}. 

We note that for a specific code in this family, if there are constraints on the periodic boundary conditions, it may not actually be decomposable. The theorem below gives an example of a specific instance of a code that is indecomposable, even if the model is decomposable when defined on an infinite lattice. 

\begin{theorem}
    If an A2BGA code $\mathcal{C}$ is defined with periodic boundary conditions of distinct prime lengths in each direction, then $\mathcal{C}$ is indecomposable.
\end{theorem}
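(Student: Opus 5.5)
The plan is to reduce indecomposability on the finite torus to a statement about subgroups of $\mathbb{Z}^d$, in the same way the first theorem of Section~\ref{subsec:verify} reduced it to the monomial group on the infinite lattice.

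\textbf{Setup.} Identify monomials with exponent vectors in $\mathbb{Z}^d$. The infinite-lattice monomial group $G$ (equivalently the difference set $S$ from that proof) then becomes a sublattice $L\subseteq\mathbb{Z}^d$. The periodic boundary conditions $x_i^{p_i}=1$ correspond to the sublattice $K=p_1\mathbb{Z}\oplus\cdots\oplus p_d\mathbb{Z}$. The finite code lives on the group $\mathbb{Z}^d/K\cong\mathbb{Z}_{p_1}\times\cdots\times\mathbb{Z}_{p_d}$, which is cyclic of order $P=\prod_i p_i$ by the Chinese remainder theorem, since the $p_i$ are distinct primes.

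\textbf{Step 1: Tanner-graph criterion on the torus.} I will rerun the argument of the earlier theorem verbatim on the finite group. The Tanner graph of $\mathcal{C}$ on the torus is connected if and only if the image of $L$ in $\mathbb{Z}^d/K$ generates the whole group, that is, if and only if $L+K=\mathbb{Z}^d$. The path argument (left qubits are linked through $X$ checks by the elements $m_im_j^{-1}$, right qubits through $Z$ checks) carries over unchanged, because it only used the group structure.

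\textbf{Step 2: index argument.} The group $\mathbb{Z}^d/(L+K)$ is a common quotient of $\mathbb{Z}^d/K$, which has order $P$, and of $\mathbb{Z}^d/L$. Provided $L$ has full rank, $\mathbb{Z}^d/L$ is finite of order $N=[\mathbb{Z}^d:L]$. By Lagrange's theorem, $|\mathbb{Z}^d/(L+K)|$ divides $\gcd(N,P)$. Hence whenever no $p_i$ divides $N$, we get $L+K=\mathbb{Z}^d$, and Step 1 gives indecomposability. The index $N$ can be computed concretely from the Smith normal form of the matrix whose columns are the exponent vectors of the differences $m_im_j^{-1}$ and $n_kn_l^{-1}$.

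\textbf{Expected obstacle: hidden hypotheses.} The main difficulty is that the statement as written appears to need two implicit assumptions, and I would make them explicit.
\begin{itemize}
\item $L$ must have full rank. If some variable never appears, or the monomials span a lower-dimensional sublattice, the code splits into decoupled copies along the missing direction for any lattice size. So the argument is for codes whose decomposition on the infinite lattice is only a finite-index sublattice splitting.
\item The primes must avoid the finitely many prime divisors of $N$.
\end{itemize}
The example from Section~\ref{subsec:verify} shows what the theorem is after. There $L$ is the even-sum sublattice, with $N=2$, so any choice of odd primes $p_i$ glues the two sublattices together. The same holds even with one $p_i=2$, since a vector like $(0,3,0)\in K$ has odd coordinate sum.

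\textbf{Fallback.} If $\gcd(N,P)\neq1$, I would check $L+K=\mathbb{Z}^d$ directly. Reduce modulo each $p_i$ separately: by CRT, surjectivity onto the cyclic group of order $P$ is equivalent to surjectivity onto each factor $\mathbb{Z}_{p_i}$. This amounts to asking that some element of $L$ has $i$-th exponent not divisible by $p_i$. That is a very mild condition, and I would present it as the precise form of the theorem.
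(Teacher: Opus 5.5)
\paragraph{Error: the full-rank hypothesis.} Your first ``hidden hypothesis'' is wrong, and Step~2 rests on it.

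\begin{itemize}
\item A lower-rank $L$ does not force a splitting for every lattice size. Only a variable whose exponent vanishes mod $p_i$ on all of $L$ does that. For instance, $L=\mathbb{Z}(1,1)$ on a $p_1\times p_2$ torus is generated by $(1,1)$, which has order $p_1p_2$.
\item You have misread the paper's own example. Every monomial of $f=1+x^{-1}y+x^{-1}z$ and $g=1+z^{-1}y$ has total degree $0$. So $L=\{v: v_1+v_2+v_3=0\}$, which has rank $2$ and infinite index; it is not the index-$2$ even-sum lattice. The infinite-lattice code therefore splits into infinitely many layers.
\item On a torus with distinct primes this example is nonetheless connected. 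The sum map has kernel $L$ and sends $K$ onto $\gcd(p_1,p_2,p_3)\mathbb{Z}=\mathbb{Z}$, so $L+K=\mathbb{Z}^3$.
\end{itemize}

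This rank-deficient case is exactly what the theorem is for: a code that decomposes on the infinite lattice but is reglued by the boundary conditions, as the sentence preceding the theorem says. Your Step~2 cannot handle it, because $N$ is undefined there. Likewise, requiring the $p_i$ to avoid the prime divisors of $N$ is only a sufficient test, not a hypothesis of the theorem.

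\paragraph{What to keep.} Promote the fallback to the main argument. By Step~1, indecomposability means $L+K=\mathbb{Z}^d$. The image of $L$ in $\mathbb{Z}^d/K$ has order divisible by every $p_i$ once it surjects onto each $\mathbb{Z}_{p_i}$. Since the $p_i$ are distinct, that order is then $P$.

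This needs no rank assumption, and it is exactly the paper's proof in lattice language. The paper takes $\alpha\in G$ containing $x_i$ and raises it to $\prod_{j\neq i}p_j$ to kill the other coordinates. It then inverts the remaining exponent mod $p_i$ to get $x_i\in G$.

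Your explicit condition is that some element of $L$ has $i$-th exponent not divisible by $p_i$. This is precisely the hypothesis the paper uses implicitly; its ``$e\nmid p$'' should read $p\nmid e$. You were right to flag it, and it is the only extra hypothesis the statement needs.
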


\begin{proof}
    Suppose $\mathcal{C}$ is defined by polynomials $f$ and $g$. Then for each of the unique variables $x_i$, we know that $x_i^p = 1$ where $p$ is the prime length of the lattice in the $x_i$ direction. Suppose there exists a term $\alpha$ in the monomial group $G$ of $\mathcal{C}$ that contains the variable $x_i$. If $\alpha$ contains other variables in addition to $x_i$, we can raise $\alpha$ to the product of the distinct prime lengths for each of the other variables. Then we are left with a term only containing a power of $x_i$, so we reduce to the case where $\alpha = x_i^e$ for some power $e \nmid p$. Then since the exponents lie in $\mathbb{Z}/{p}$, there exists a multiplicative inverse $r$ such that $er \equiv 1$ mod($p$). Therefore $\alpha^{r} = x_i$, so $x_i$ lies in the monomial group. Since we can do this for each unique variable $x_i$, the monomial group is generated by the unique variables, as desired.
\end{proof}

Dua et al.~\cite{compactifyingFractons} and Chen et al.~\cite{yu-an} find lower bounds for the distance scaling of different A2BGA codes, where Dua et al.~look at three-dimensional fracton models, and where Chen et al.~consider BB codes. Both papers find these bounds by further compactifying A2BGA codes. Suppose we have a three-dimensional, or higher, translation-invariant stabilizer code that exhibits topological order, and has no string-like operators in the $z$-direction, and additional higher dimensional directions. Then, Dua et al.~found that this code can be compactified into a two-dimensional model. They find that this 2D model is unitarily equivalent to copies of the 2D toric code, using the following structure theorem.

\begin{theorem}[\cite{compactifyingFractons}, Section II A]
    \label{structure}
    Any two-dimensional translation-invariant CSS code with topological order over $R = \mathbb{F}_2[x^{\pm}, y^{\pm}]$ is equivalent to a tensor product of finitely many copies of the toric code and a product state.
\end{theorem}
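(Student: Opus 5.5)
The plan is to reduce the statement to Haah's structure theory of translation-invariant topological codes over $R=\mathbb{F}_2[x^{\pm},y^{\pm}]$. Such a code has $q$ qubits per site. Its $X$- and $Z$-checks are encoded by stabilizer maps $\sigma_X : R^{t_X}\to R^{q}$ and $\sigma_Z : R^{t_Z}\to R^{q}$, where each column is a generating polynomial vector. The CSS commutation condition reads $\sigma_X^\dagger \sigma_Z = 0$, with $\dagger$ the antipode (the overline map) followed by transpose. Topological order, in the sense of Section~\ref{subsec:topological}, means exactness of the complex
\[
R^{t_Z}\xrightarrow{\sigma_Z} R^{q} \xrightarrow{\sigma_X^\dagger} R^{t_X}
\]
at the middle term, and the same for the dual complex. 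I will treat ``equivalent'' as allowed to mean composition of three operations: symplectic (here CSS-preserving, i.e.\ $GL_q(R)$ acting simultaneously as $E$ and $(E^{\dagger})^{-1}$) changes of basis; redefinition of check generators; and tensoring with ancilla qubits in a product state.

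First I will use the ancilla freedom and the structure of the excitation module to reduce to the case of finite-dimensional $\mathrm{coker}\,\sigma_X^\dagger$. In 2D, topological order forces the charge module $\mathrm{coker}(\sigma_X^\dagger)$ (restricted to the image part) to be finite over $\mathbb{F}_2$, up to free summands that can be split off as product states. The key tool here is the Quillen--Suslin theorem: projective modules over the Laurent polynomial ring are free, so that kernels and images may be taken free. This lets me stabilize and choose bases in which $\sigma_X^\dagger$ has a block form $\begin{pmatrix} I & 0\\ 0 & M\end{pmatrix}$. The identity block corresponds to single-qubit checks, i.e.\ product-state factors, which I then discard.

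The main step, and the expected obstacle, is classifying the remaining block $M$. I would use that in two dimensions the homology of the chain complex is supported on ``points at infinity''. Its Fitting ideals must be the augmentation ideal $\mathfrak{m}=(x-1,y-1)$ after coarse-graining, that is, after passing to $\mathbb{F}_2[x^{\pm n},y^{\pm n}]$, which is again a Laurent polynomial ring of the same form. The annihilator of the charge module is $\mathfrak{m}$-primary up to a finite quotient, and coarse-graining makes it exactly $\mathfrak{m}$. At that point the complex splits as a direct sum of Koszul complexes on $(1+x,1+y)$, each of which is literally the toric code of Eq.~\eqref{toricpoly}. I expect the hard part to be this splitting: showing that an exact complex whose homology is a direct sum of copies of $R/\mathfrak{m}$ is equivalent to a direct sum of Koszul complexes. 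I would attempt it via the uniqueness of minimal free resolutions over the local ring $R_\mathfrak{m}$, followed by a Quillen--Suslin-type patching argument to globalize. This is the content of the structure theorem in the cited reference (Haah's classification, as used in Section II A of \cite{compactifyingFractons}), and I would ultimately invoke it there rather than reprove it in full.

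Finally, I will check that each step preserves the code up to the stated equivalence. Basis changes are local finite-depth Clifford circuits. Coarse-graining only regroups qubits. Dropped factors are product states. Together these yield finitely many toric-code copies plus a product state, as claimed.
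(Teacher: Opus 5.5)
The paper gives no argument for this statement. It is imported from Dua et al., who rely on Haah's classification of two-dimensional translation-invariant topological Pauli stabilizer codes. Since you also hand the decisive splitting step to that classification, your proposal rests on the same foundation as the paper.

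What you add is an algebraic outline. For CSS codes it can be finished without the citation, and without the localization-and-patching you anticipate:
\begin{itemize}
\item Because $R$ has global dimension $2$, the modules $\mathrm{im}\,\sigma_Z=\ker\sigma_X^\dagger$ and $\mathrm{im}\,\sigma_X=\ker\sigma_Z^\dagger$ (equalities by topological order) are second syzygies. Hence they are projective, and free by Quillen--Suslin.
\item So redundant generators split off and can be dropped. Then $0\to R^{t_Z}\xrightarrow{\sigma_Z}R^{q}\xrightarrow{\sigma_X^\dagger}R^{t_X}\to C\to 0$ is a free resolution of $C=\mathrm{coker}\,\sigma_X^\dagger$.
\item Injectivity of $\sigma_X$ plus dual exactness say exactly that $\mathrm{Hom}_R(C,R)=\mathrm{Ext}^1_R(C,R)=0$; the antipode only twists scalars. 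So $C$ has grade $2$ and hence finite length.
\item Coarse-grain to $R'=\mathbb F_2[x^{\pm n},y^{\pm n}]$ and set $\mathfrak m'=(1+x^n,1+y^n)$. Then $C\cong(R'/\mathfrak m')^{k}$, which is also resolved by $k$ copies of the toric-code (Koszul) complex.
\item Two length-two free resolutions of the same module become isomorphic after adding trivial complexes $R\xrightarrow{1}R$ in degrees $(1,0)$ and $(2,1)$. This follows from a degree-by-degree Schanuel argument, and those trivial complexes are precisely $|+\rangle$ and $|0\rangle$ ancillas.
\item After padding with ancillas, Suslin's theorem writes the isomorphism on the qubit module as a product of CNOT layers and sublattice shifts.
\end{itemize}
This gives a short, self-contained proof of exactly the CSS statement the paper uses. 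The cited classification also covers non-CSS codes, which takes substantially more work.

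Some intermediate claims in your sketch need correcting.
\begin{itemize}
\item The Fitting ideal of $(R/\mathfrak m)^{k}$ is $\mathfrak m^{k}$, not $\mathfrak m$. The invariant that coarse-graining turns into $\mathfrak m$ is the annihilator.
\item The annihilator need not be $\mathfrak m$-primary before coarse-graining. For example, the Koszul code on the regular sequence $(1+x+x^2,1+y)$ has charge module $\mathbb F_4$, supported at the maximal ideal $(1+x+x^2,1+y)\neq\mathfrak m$.
\item Coarse-graining works simply because $x$ and $y$ act on the finite-dimensional space $C$ through the finite group $GL_k(\mathbb F_2)$. Hence $x^n=y^n=1$ on $C$ for some $n$.
\item Free summands of $C$ come from redundant $X$-generators, i.e.\ from $\ker\sigma_X$, and are removed by redefining generators. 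They are not product states: product-state ancillas leave $C$ unchanged.
\end{itemize}
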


Since these codes have distance that scales linearly with the lattice size, this implies that the original code also has distance that scales linearly with the lattice size. Our compactification process implies that a fracton model with no string-like operators in all but two directions is also equivalent to a tensor product of finitely many copies of the toric code. Then Theorem \ref{structure} gives us the following corollary.
This condition is satisfied for the simple HGP parent models of the A2BGA codes, provided the two directions that remain uncompactified are not wholly contained within the hyperplane of one input codes in the HGP parent model. 

\begin{corollary}
    \label{lowerBounds}
    If $\mathcal{C} = [[n, k, d]]$ is a $D$-dimensional fracton model with no string-like operators in all but two directions, then it has distance scaling lower bounded by \[O(n^{1/D}) \leq d.\]
\end{corollary}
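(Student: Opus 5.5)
The argument compactifies the $D$-dimensional model down to two dimensions, identifies the result with toric codes via Theorem~\ref{structure}, and pulls the toric-code distance back to $\mathcal{C}$.

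\emph{Setup.} Let $\mathcal{C}$ live on a $D$-dimensional lattice with periodic lengths $L_1,\dots,L_D$. Take the generic scaling regime $L_i \sim L$, so that $n \sim L^D$. Label the two directions that may carry string-like operators $x_1, x_2$; by hypothesis $x_3,\dots,x_D$ carry none.

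\emph{Step 1: compactify to two dimensions.} Impose twisted boundary conditions as in Section~\ref{subsec:compact}, setting each $x_j$ ($j\ge 3$) equal to a monomial in $x_1,x_2$, namely $x_j = x_1^{s_j}x_2^{t_j}$. The shifts $s_j,t_j$ are chosen of order $L$, so that the torus is wrapped around the compactified directions. This is exactly the balanced-product quotient already used for the A2BGA unification. The result is a two-dimensional translation-invariant CSS code over $\mathbb{F}_2[x_1^{\pm},x_2^{\pm}]$ with a larger unit cell, whose qubits are the same $n$ physical qubits.

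\emph{Step 2: topological order of the compactification.} Show that the compactified code is topologically ordered in the sense of Section~\ref{subsec:topological}. Suppose some geometrically local two-dimensional operator commutes with all checks but is not a stabilizer. Lifted back to the parent, it is a logical operator that wraps the compactified directions while staying bounded in the other directions, i.e.\ a string-like operator in a direction $x_j$, $j\ge 3$. This contradicts the no-strings hypothesis. Formalizing this uses the no-strings rule $l\ge\alpha w$: the ratio between the compactification length and the window size must stay large, so that bounded support in two dimensions corresponds to a string of bounded width in the parent. I expect this to be the main obstacle. The subtlety is that the compactified unit cell grows with $L$, so the relevant locality scale is not fixed. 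To control it, I would follow the argument of Dua et al.\ and choose the twist vectors so that the compactified circumference grows while the two-dimensional locality is measured relative to a fixed window.

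\emph{Step 3: apply the structure theorem.} With topological order established, Theorem~\ref{structure} makes the two-dimensional code equivalent, by a locality-preserving (finite-depth) Clifford map, to finitely many toric codes plus a product state. Toric codes on a torus of linear size $\ell$ have distance $\Theta(\ell)$. The equivalence distorts distance by at most a constant factor, so $d = \Omega(\ell)$.

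\emph{Step 4: count.} The compactified torus has linear size $\ell \gtrsim L \sim n^{1/D}$: it contains all $n\sim L^D$ qubits in a unit cell of size $L^{D-2}$, with linear dimensions at least $L$ by the choice of twists. Hence $d = \Omega(n^{1/D})$. Since the logical operators of the compactified code and of $\mathcal{C}$ coincide as Pauli operators on the same qubits, this lower bound transfers directly to $\mathcal{C}$, giving $O(n^{1/D}) \leq d$.
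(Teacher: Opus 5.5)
\textbf{Gap in Step 3.} Your route is the paper's own sketch: compactify to two dimensions, get topological order from the no-strings hypothesis, invoke Theorem~\ref{structure}, read off the toric-code distance, and use $n\sim L^D$. But Step~3 does not go through as written.

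Theorem~\ref{structure} concerns a single fixed code over $\mathbb{F}_2[x^{\pm},y^{\pm}]$, with a fixed number of qubits per site and a fixed check range. The equivalence it produces depends on that code: the number of toric-code copies, the range of the locality-preserving Clifford map, and the coarse-graining scale. Your two-dimensional code depends on $L$. Either its unit cell contains $qL^{D-2}$ qubits, or, if the relations $x_j=x_1^{s_j}x_2^{t_j}$ are read literally, its checks have range $\sim L$.

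So the claim that the equivalence distorts distance by at most a constant factor is exactly what is unavailable. You only get $d\ge c(L)\,\ell$ with $c(L)$ uncontrolled. Dua et al.~\cite{compactifyingFractons} already find that the number of toric-code copies changes with the compactification length, and nothing bounds the circuit range uniformly. Step~2 is left as a plan, and this same $L$-dependence is what makes it hard. The no-strings rule is a statement about the fixed infinite parent, whereas you need topological order, with uniform constants, for a code that changes with $L$.

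A smaller point concerns Step~1. The relations $x_j=x_1^{s_j}x_2^{t_j}$ give $\mathbb{Z}^D/\Lambda\cong\mathbb{Z}^2$ with the original unit cell. That is not $\mathcal{C}$ on the $L^D$ torus, since $e_j-s_je_1-t_je_2$ never lies in $L\mathbb{Z}^D$. Under that reading, the closing ``same qubits, same logicals'' transfer fails; you want $x_j^{L}=1$, possibly twisted as $x_j^{L}=x_1^{s_j}x_2^{t_j}$.

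The paper's own sketch skips the same uniformity point (``this implies that the original code also has distance that scales linearly''). So this step has to be repaired, not cited.

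\textbf{A fix.} Drop the structure theorem and use topological order of $\mathcal{C}$ directly in $D$ dimensions. Let $r$ be the check range.

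\begin{enumerate}
\item \emph{Boxes are correctable.} On the $L^D$ torus, take a Pauli supported in a box of side $<L-2r$ that commutes with all checks. It lifts to a finitely supported Pauli on $\mathbb{Z}^D$ that commutes with all checks, because the $r$-neighbourhood of the box does not wrap. By topological order the lift is a finite product of checks. Folding that product back onto the torus shows the original Pauli is a stabilizer.
\item \emph{Union lemma.} By the union lemma of~\cite{BPTBound}, correctable regions more than $r$ apart have correctable union.
\item \emph{Clustering.} Take any Pauli of weight $w<(L-2r)/r$ that commutes with the checks. Its support splits into $r$-connected clusters, each of diameter $<wr<L-2r$. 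Each cluster lies in a correctable box, and the clusters are pairwise more than $r$ apart, so the Pauli is a stabilizer.
\end{enumerate}

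Hence $d\ge (L-2r)/r=\Omega(n^{1/D})$, with constants set by the fixed local model alone. This uses only the topological order built into ``fracton model''. The no-strings hypothesis is what keeps descendants topologically ordered, not what bounds $d$ for $\mathcal{C}$.

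If you prefer to keep the compactification route, you must compactify with $L$-independent twists so that Theorem~\ref{structure} applies with fixed constants. But then $n\sim\ell^2$, and you get $\Omega(\sqrt{n})$ for that descendant family, not a bound for $\mathcal{C}$ on $L^D$ tori.
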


However, if $\mathcal{C}$ has a string-like operator in one of the compactification directions, then this operator becomes local when we impose the twisted boundary conditions. This means that the distance of the code can collapse, so there is no definitive lower bound for this case. 

Chen et al.~also use Theorem \ref{structure} to find lower bounds on BB codes. Consider a family of BB codes with a fixed finite weight $f$ and $g$ and varying boundary conditions, such that the distance grows with the system size. Then this implies that the code has no local operators, and therefore is equivalent to copies of the toric code by Theorem \ref{structure}. This means that the distance scaling of BB codes is also lower bounded by $O(L)$ and therefore $O(\sqrt{n})$, as in Theorem \ref{lowerBounds}.

% ////// LOGICAL GATES /////////////////

\section{Other code properties}

In this section we discuss other properties of the parent and descendant codes and speculate about how they may be related under compactification. 

\subsection{Logical Gate Restriction Conjecture}

\label{sec:logicalGates}

We next propose a conjecture based on the fracton family tree structure. In particular, we conjecture that logical gate restrictions on the parent models are inherited by their descendants. Ref.~\cite{noGoHGP} derived a result for HGP codes showing that any logical transversal gate is restricted to the Clifford group. 
We now use this result to conjecture the same gate restriction for the compactified models. 
We remark that Ref.~\cite{Pastawski_2015} discussed transversal gate restrictions related to code compactification in type-II fracton models.

Suppose we have a code with stabilizer group $\mathcal{S}$. Then a (not necessarily Pauli-type) unitary operator $U$ is a \textit{logical operator} if it maps elements of $\mathcal{S}$ to stabilizers of the code. We can write this more formally as \cite[p. 23]{qecIntro}:
\begin{equation}
    USU^{\dagger} \in \mathcal{S} \quad \forall S \in \mathcal{S}
\end{equation}
A \textit{transversal gate} is a logical operator $U$ that can be written as the tensor product of unitaries acting on each qubit individually: 
\begin{equation}
    U = \bigotimes_{i = 1}^n u_i.
\end{equation}
We are interested in transversal logical gates because of their inherent fault-tolerance. Due to their decomposition of unitaries on each physical qubit, they do not couple the qubits within each code block, and therefore can only propagate errors between different code blocks.

We next define the \textit{Clifford hierarchy}. We define the first level of the Clifford hierarchy, $\mathcal{P}_0$, to be the multiplicative group containing $\pm I$. Next, we define $\mathcal{P}_1$ to be the Pauli group, which contains $I, X, Y, Z$ and each of these multiplied by $\pm i$. After this, the $i$-th level of the Clifford hierarchy is recursively defined as follows: 
\begin{equation}
   \mathcal{P}_i = \{U \mid UPU^{\dagger} \in \mathcal{P}_{i - 1} \quad \forall P \in \mathcal{P}_1\} 
\end{equation}
\cite{bravyiKonig}. We are particularly interested in $\mathcal{P}_2$, which is called the \textit{Clifford group}. This consists of all the unitary operators that send Pauli operators to Pauli operators. Crucially, if a code has transversal gates restricted to the Clifford group, then it is not possible to implement non-Clifford gates, such as the T-gate, transversally.

Below we state Fu et al.'s main result:
\begin{theorem}[\cite{noGoHGP}, Theorem 39]
    Let $Q$ be a $t$-dimensional ($t \geq 2$) hypergraph product code with $k$ logical qubits for any $k \geq 1$. If $U$ is a logical unitary implementable by transversal gates and the distance of the code satisfies $d \geq 3$, then $U$ is restricted to the Clifford group $\mathcal{P}_2$.
\end{theorem}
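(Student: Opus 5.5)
The plan is to reduce the claim to the Pastawski--Yoshida partition criterion. That criterion says the following. Suppose the physical qubits of a stabilizer code can be partitioned into $m+1$ sets $R_0,\dots,R_m$, each of which is correctable, meaning that erasing it does not damage the logical information. Then every transversal logical unitary lies in $\mathcal{P}_m$. To place $U$ in the Clifford group $\mathcal{P}_2$, it therefore suffices to partition the qubits of the $t$-dimensional hypergraph product code $Q$ into three correctable regions. Transversality enters only through this criterion. The inductive argument behind it conjugates $U$ by logical Paulis supported on the complement of one region and uses correctability to strip off one region at a time. I would take it as the black box and concentrate on building the partition.

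The partition should come from the product structure of $Q$. Write $Q$ as an iterated HGP of classical codes $H_1,\dots,H_t$. Its qubits are then labelled by tuples of bits and checks of the input codes, with the qubit sectors indexed by which coordinates are ``bit-type'' and which are ``check-type'', as in Eq.~\eqref{eq:hgp_notation} for $t=2$. The first step is to recall the canonical logical basis of HGP codes. Logical $X$ representatives are supported on a single ``row'' (one sector, with one coordinate fixed) and logical $Z$ representatives on a single ``column''. These, together with the fact that an operator supported on a set $R$ and commuting with all stabilizers is trivial exactly when $R$ is correctable, give a combinatorial test: $R$ is correctable iff it contains the support of no nontrivial logical operator. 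The second step is to choose the three regions. I would split each input code's bits and checks into a small number of classes, and define each $R_j$ as the set of qubits whose coordinate classes obey a fixed pattern. The aim is that no region contains a full row of any sector in the $X$ direction nor a full column in the $Z$ direction, even after multiplication by stabilizers.

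The hypothesis $d\ge 3$ is used to rule out degenerate cases. It guarantees that every single qubit, and indeed every pair of qubits, is correctable. This lets me handle the boundary qubits that do not fit cleanly into the coordinate pattern: I absorb them into a region without creating a logical operator. It also excludes the trivial $d\le 2$ codes, where the partition can genuinely fail and non-Clifford transversal gates exist. I would also use the union lemma that two disjoint correctable regions separated by stabilizer structure stay correctable. This would let me glue together the pieces of each $R_j$ coming from different sectors.

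The main obstacle will be the second step: proving that each of the three regions is correctable for arbitrary input classical codes, and not just for lattice examples. Logical operators of a general HGP can be spread across sectors, so a region avoiding all canonical representatives might still support some equivalent representative. I would first attack this with the cleaning lemma, which says a region $R$ is correctable iff no nontrivial logical operator can be cleaned off its complement. I would verify this sector by sector using the Künneth decomposition of the HGP homology, reducing the claim to statements about the input codes' kernels restricted to a subset of bits. If that route stalls, the fallback is to restrict first to diagonal transversal gates, for which the argument reduces to the classical divisibility structure of the CSS codes. Non-diagonal transversal gates would then be reduced to the diagonal case by conjugating with single-qubit Cliffords.
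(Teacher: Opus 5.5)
The paper does not prove this statement. It quotes it from \cite{noGoHGP}, so your proposal has to stand on its own, and it has a genuine gap. Your reduction is correct: three correctable regions would force transversal logical gates into $\mathcal{P}_2$. But all of the content lies in the step you defer, namely exhibiting such a partition for an \emph{arbitrary} $t$-dimensional HGP, and none of the tools you list can produce one.

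\begin{itemize}
\item Keeping canonical row/column representatives out of each region is far from sufficient. In the Tillich--Z\'emor labelling, colour the bit--bit sector $V_1\times V_2$ by $(i+j)\bmod 3$. Any codeword $x\in\ker H_1$ supported in a single residue class gives a nontrivial $Z$-logical $x\otimes e_j$, with $j$ in an information set of $\ker H_2$, lying inside one colour. So a partition would have to be tailored to the input codes, and you give no mechanism for that.
\item The union lemma needs the pieces to share no stabilizer generator. In a hypergraph product a generic check acts on both the bit--bit and check--check sectors, and for expander inputs there is no separation at all, so the lemma cannot glue sector pieces.
\item $d\ge 3$ only makes sets of at most two qubits correctable. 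Adding qubits to a large correctable region can complete a logical operator, so it cannot be used to absorb leftovers.
\item The fallback fails. A generic single-qubit unitary is not Clifford-conjugate to a diagonal one, and local Clifford conjugation changes the code anyway.
\end{itemize}

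The missing idea is that no global partition is needed, only a pairwise version of the Bravyi--K\"onig commutator argument. This criterion is implied by your three-region criterion but is weaker.

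Take a transversal logical $U$ and a representative $\bar P$ of a logical Pauli. Then $K_P=U\bar PU^\dagger\bar P^\dagger$ is again a tensor product of single-qubit unitaries supported on $\mathrm{supp}\,\bar P$. Hence, for any logical Pauli representative $\bar Q$, the operator $K_P\bar QK_P^\dagger\bar Q^\dagger$ is a logical operator supported on $\mathrm{supp}\,\bar P\cap\mathrm{supp}\,\bar Q$. Suppose that for every pair of logical Pauli generators you can choose representatives, separately for each pair, whose supports meet in a correctable set. Then this commutator acts as a phase on the code space, so each $K_P$ is a logical Pauli up to phase, and therefore $U\in\mathcal{P}_2$.

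The K\"unneth canonical basis of an HGP supplies exactly such representatives:
\begin{itemize}
\item Conjugate-type generators have product form. Examples are $x_a\otimes e_j$ versus $e_i\otimes z_b$, the analogous pair in the check--check sector, and, for $t$ factors, a classical codeword in each factor where the partner has a unit vector. Such pairs meet in at most one qubit.
\item Same-type representatives can be made disjoint. For example, $x\otimes e_j\sim x\otimes(e_j+h)$ with $h$ a parity check of the second input code satisfying $h_j=1$, because $x\otimes h$ is a $Z$-stabilizer whenever $x\in\ker H_1$.
\item Pairs from different sectors are disjoint.
\end{itemize}
The distance hypothesis is then used only to guarantee that these (at most one-qubit) intersections are correctable.
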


This immediately implies that all of our parent HGP codes have logical transversal gates restricted to the Clifford group. This leads us to make the following conjecture.
\begin{conjecture}
    Any A2BGA code has transversal gates restricted to the Clifford group.
\end{conjecture}

This would be a significant result, as it would imply that no A2BGA code can have non-Clifford transversal gates. We believe this to be true because the logical unitaries are the same from the parent code to the child code if the compactification radius is sufficiently large. However, the compactification process may create new logical operators not present in the parent code, so the set of logical operators of the child code may not be a subset of those of the parent. 

% ////// ENERGY BARRIER  /////////////////

\subsection{Energy Barrier Conjecture}

\label{sec:Energy}

Another application of our unification is in analyzing the energy barrier of the compactified codes, using results describing the energy barrier of the parent HGP code. We define the energy barrier and discuss the energy barrier of a HGP code following Ref.~\cite{hgpEnergyBarrier}.

Suppose that we have a stabilizer code $\mathcal{C}$ with parity-check matrix $H$, written in the binary symplectic form (see Haah's thesis for a review of this form \cite{haahThesis}). Let $v(P)$ be the bit-string representation of a Pauli operator $P$ when also written in this form. Then define the \textit{energy} of a Pauli operator $P$ by 
\begin{equation}
  \epsilon(P) = \text{wt}(H v(P)).  
\end{equation}
A sequence of Pauli operators $P_1,...,P_n$ such that $P_i$ and $P_{i + 1}$ differ by at most one qubit for all $1 \leq i < n$ is called a \textit{path}. Then the energy barrier of a path is given by the maximum energy of all the Pauli's in the path. The energy barrier of $P$, denoted by $\Delta(P)$, is the minimum energy barrier across all possible paths from $0$ to $P$. The energy barrier of $\mathcal{C}$ is then the minimum energy barrier over all of the nontrivial logical operators. The main result of Zhao et al.\@ on the energy barrier of a HGP is stated below.

\begin{theorem}[\cite{hgpEnergyBarrier}, Theorem 2]
    Suppose we have a HGP code with parity-check matrix $H = \begin{psmallmatrix}
        0 & H_Z \\
        H_X & 0
    \end{psmallmatrix}$.
    Also suppose that $H_X$ and $H_Z$ have constant row and column weight. If $H_X$, $H_Z$ and their transposes each have energy barrier that grows with the system size, then \[\Delta(H) = \textnormal{min}(\Delta(H_X), \Delta(H_Z), \Delta(H_X^T), \Delta(H_Z^T)).\]
\end{theorem}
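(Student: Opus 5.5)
The plan is to prove the two inequalities $\Delta(H)\le\min(\cdot)$ and $\Delta(H)\ge\min(\cdot)$ separately. Throughout, the HGP structure $H_X=[H_1\otimes I\mid I\otimes H_2]$, $H_Z=[I\otimes H_2^T\mid H_1^T\otimes I]$ is used to translate quantum paths into classical paths and back. By the CSS structure, $X$-type and $Z$-type errors are detected by disjoint sets of checks and contribute additively to $\epsilon$. It therefore suffices to treat $X$-logicals and $Z$-logicals separately and take the minimum at the end.

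\textbf{Upper bound.} Here I will exhibit explicit low-barrier paths to nontrivial logicals. Consider the canonical $X$-logicals of the HGP of the form $(c\otimes u,\,0)$, where $c\in\ker H_1$ is a nontrivial classical codeword and $u$ is a unit vector that is not in the image of the relevant transposed check matrix. This is the standard Tillich--Z\'emor basis of logical representatives. Take an optimal classical path $0=c_0,c_1,\dots,c_m=c$ realizing $\Delta$ of the classical code. Its lift $(c_i\otimes u,0)$ is a quantum path in which consecutive steps differ on one qubit. Its syndrome is $(H_1c_i)\otimes u$, so its energy equals the classical energy $\mathrm{wt}(H_1 c_i)$ exactly. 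Doing the same for the other sector and for the $Z$-type logicals gives $\Delta(H)\le\min$ of the four classical (and transposed) barriers. Because we only take weights of Kronecker products with a unit vector, this step needs no weight assumptions.

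\textbf{Lower bound.} Fix any path $P_0=0,\dots,P_m=L$ to a nontrivial $X$-logical $L=(L_1,L_2)$. Nontriviality means some $Z$-logical anticommutes with $L$; I would choose this $Z$-logical in canonical product form, e.g.\ $(0,\,c'\otimes u')$ or $(u''\otimes c'',0)$. The idea is to define a linear ``slicing'' map $\pi$ from the qubits to the bits of one of the classical codes, namely restriction to the row or column singled out by the unit vector of that canonical logical. The map should satisfy two properties. First, $\pi$ commutes with the boundary maps, so that the syndrome of $\pi(P_i)$ is the slice of the syndrome of $P_i$. Second, $\pi(L)$ is a nontrivial classical codeword, or a nontrivial element of the transposed code. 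Then $\pi(P_i)$ is a classical path. Its energy at each step is at most $\epsilon(P_i)$, since a slice of a syndrome has weight at most the full syndrome weight. Consecutive slices differ in at most one bit, and $\pi(P_m)$ is nontrivial. Hence $\max_i\epsilon(P_i)\ge\Delta(\text{classical})$, and minimizing over paths and logicals gives the reverse inequality.

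\textbf{Main obstacle.} The hard step is making $\pi$ genuinely a chain map in the lower bound. Naive restriction to a single row only sees one of the two qubit sectors. The other sector $I\otimes H_2$ leaks syndrome into and out of the slice, so the slice of the syndrome is not simply $H_1$ applied to the slice of the error. My first attempt would be to compose restriction with the cochain $u'$ contracted against $H_2$: use that $u'\notin\mathrm{im}\,H_2^T$ to pick a dual vector $\phi$ with $\phi H_2=0$ and $\phi(u')=1$, and define $\pi=(I\otimes\phi)$ on the left sector, which kills the right-sector contribution. The cost is that $\phi$ may have large support, so $\pi$ can increase weight. This is where the hypotheses enter. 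Constant row and column weights, together with growing barriers for all four matrices, should let me argue that a low-energy path can be locally cleaned so that its support meets only boundedly many slices. Alternatively, I would average over a family of $\phi$'s and use a counting argument so that some slice inherits energy at most $\epsilon(P_i)$. If exact equality fails this way, I expect to recover it up to the structure of the argument in Ref.~\cite{hgpEnergyBarrier}.
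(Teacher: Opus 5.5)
There is a genuine gap: your lower bound is never completed, because you stop at an obstacle that does not exist. The paper gives no proof of its own here (the theorem is quoted from Zhao et al.), so your argument has to stand alone.

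Two preliminary points are fine. You read the four barriers as those of the input matrices $H_1,H_2$ and their transposes, which is the intended meaning; read literally with the quantum $H_X,H_Z$, the statement would be degenerate. Your upper bound via the lifted paths $(c_i\otimes u,0)$ is also correct, given that all four classical codes are nontrivial so the required unit vectors exist.

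\textbf{The weight concern is unfounded.} Index the syndrome $s$ by (checks of $H_1$)$\times$(checks of $H_2$) and write $s^{(j)}$ for its column slices. Then $(I\otimes\phi)s=\sum_{j\in\mathrm{supp}\,\phi}s^{(j)}$, so $\mathrm{wt}\big((I\otimes\phi)s\big)\le\sum_j\mathrm{wt}(s^{(j)})\le\mathrm{wt}(s)$, however large $\mathrm{supp}\,\phi$ is. The step size is also controlled. A change of one qubit at left-sector position $(v,j)$ changes $\pi(P)$ only in bit $v$, and a right-sector change leaves $\pi(P)$ untouched. The weight of $\pi(P_i)$ itself never enters the energy. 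So once $\phi H_2=0$, you have $H_1\pi(P_i)=(I\otimes\phi)s_i$, a classical path whose energy is at most $\epsilon(P_i)$ at every step. Hence $\max_i\epsilon(P_i)\ge\Delta(H_1)$ immediately, with no use of constant row/column weights or growing barriers. Your proposed cleaning, averaging over $\phi$'s, and closing appeal to ``the structure of the argument in the reference'' are unnecessary. As written, though, they leave $\Delta(H)\ge\min(\cdot)$ unproved.

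\textbf{The choice of $\phi$ is mis-specified.} You build $\phi$ from the unit-vector factor $u'$ of the dual logical $(0,c'\otimes u')$, asking for $\phi H_2=0$ and $\phi(u')=1$. This does not type-check:
\begin{itemize}
\item $u'$ indexes bits of $H_2$, whereas $I\otimes\phi$ on the left sector needs $\phi$ to be a functional on the checks of $H_2$.
\item $u'\notin\mathrm{im}\,H_2^T$ only yields a vector of $(\mathrm{im}\,H_2^T)^\perp=\ker H_2$, not one with $\phi H_2=0$.
\end{itemize}

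The correct pairing takes $\phi$ to be the codeword factor of a dual logical in the same sector.
\begin{itemize}
\item Suppose $L$ anticommutes with $(u''\otimes c'',0)$, where $c''\in\ker H_2^T$. Set $\phi=c''^{T}$. Then $\pi$ kills the right-sector term and vanishes on every stabilizer $((I\otimes H_2)t,(H_1\otimes I)t)$, so arbitrary representatives $L$ are covered, not just canonical ones. Moreover $\pi(L)\in\ker H_1$ and $\pi(L)_{u''}=\langle L,(u''\otimes c'',0)\rangle=1$.
\item Suppose instead $L$ anticommutes with $(0,c'\otimes u')$, where $c'\in\ker H_1^T$. Contract the first tensor factor with $c'^{T}$. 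This gives a nonzero codeword of $\ker H_2$ whose syndrome is a sum of row slices of $s_i$, so its weight is again at most $\mathrm{wt}(s_i)$.
\end{itemize}
The canonical dual logicals, together with the stabilizers, span all dual logicals, so one of these two cases always occurs. This yields $\min(\Delta(H_1),\Delta(H_2))$ for this error type, and the other type gives the transposed pair.
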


Note that if $H_X$, $H_Z$ and their transposes each have a constant energy barrier, then the overall energy barrier is bounded by a constant. 
We expect that this will upper bound the possible energy barrier of the compactified codes. This is because the local scaling of the energy barrier for logical operator segments is equivalent in a parent code and its descendants. 
\begin{conjecture}
    \label{energy}
    The energy barrier scaling of any A2BGA code family is upper bounded by the energy barrier scaling of its HGP parent code family.
\end{conjecture}

It is known that the Newman-Moore model generated by $f(x, y) = 1 + x + y$ has logarithmic energy barrier \cite{Newman-MooreEnergyBarrier}. 
Hence, if Conjecture \ref{energy} is true, then all of the codes in the fracton tree generated by two copies of the Newman-Moore code, including weight six BB codes, would have energy barrier that grows at most logarithmically.

% ////// FUTURE WORK /////////////////

\section{Discussion}

\label{sec:Discussion}

We introduce a construction of a wide variety of translation invariant qLDPC codes from local fracton models in higher dimensions. For Abelian two-block group algebra (A2BGA) codes, including many fracton, fractal spin liquid, and bivariate-bicycle (BB) codes, the higher dimensional parent models are simple hypergraph-product (HGP) codes. We considered three fracton families defined on qubits, and we classified various known codes into these trees. 

We use our mapping of A2BGA codes to higher dimensional local codes to generalize upper distance bounds introduced for generalized bicycle codes in Ref.~\cite{BBdistBounds}. 
Our mapping motivated us to propose conjectures about the connection between the transversal logic gate restrictions and energy barriers of the parent and child codes. These conjectures are based on the fact that the local structure, including logical operator segments and their energy barriers, are the same for the child and parent codes provided there is a sufficiently large compactification radius.

There are a number of extensions to this project that would be interesting to explore. When examining the logical gate restrictions on codes, we only consider transversal gates. This could later be extended to search for restrictions on constant-depth circuits for the compactified models. We could also consider logic gates implemented via qLDPC code surgery procedures that use flexible ancilla systems~\cite{Williamson_2026,Ide_2025,yuan2026parsimoniousquantumlowdensityparitycheck} where it would be interesting to apply the structure of the higher dimensional fracton parent to find a systematic construction of performant ancilla systems.

The constructions of fracton family trees considered in this work are for models based on qubits. A distinct tree structure should emerge for models based on qudits~\cite{Kim_qudit_fracton}. Another potential direction is to explore whether our mapping informs decoding efficiency for the non-local child codes following a similar method to Refs.~\cite{BBDecoding, tan2026generalizedmatchingdecoders}. 
Another question is how coupled layer constructions of qLDPC codes~\cite{coupled_layer_product, rakovszky2023, rakovszky2024} are related to our higher dimensional mapping and coupled layer constructions of type-II fracton models~\cite{Hsieh_2017,Williamson_2021}. 
Similarly, one could ask how gapped boundary conditions of parent and child models are related.  
In this direction, Tile codes were introduced in Ref.~\cite{TileCodes} which equip the BB codes with a surface-code-like boundary following the standard procedure for introducing $X$ or $Z$ type boundaries to CSS codes, cf. Ref.~\cite{defectsCubic}. 
Finally, it would be interesting to establish a precise connection between the bifurcating entanglement-renormalization of the parent fracton models and the structure of their descendent codes~\cite{Haah_2014, Dua_2020}. 

\medskip 

\textbf{Note added.} While this work was being written Ref.~\cite{Shaw2026} appeared whose appendix also contains a discussion of how A2BGA codes can be viewed as higher dimensional local codes. 

\medskip 

\section*{Acknowledgements}
C.H.~would like to thank their fellow quantum REU students for valuable discussions.
The authors would like to thank William Gasarch for organizing the exceptional REU-CAAR program, during which this work was completed. 
DJW is supported by the Australian Research Council Discovery Early Career Research Award (DE220100625). 

\medskip 

\appendix

\section{Fracton Family Tree Compactifications}
\label{appendix}
Table \ref{fracton_tree_table} describes explicit twisted boundary conditions and change of variables necessary to place selected codes into one of the three fracton family trees described in Section \ref{subsec:families}. For the families of codes we unify into our trees, such as the BB codes and fractal spin liquids, we choose specific codes in order to write explicit twisted conditions. For the BB codes we decide to highlight the Gross code, as it is one of the most promising BB codes for near-term quantum error correction \cite{BBCodes}. Since all BB codes have the same weight, they are all children of the same parent code, though the specific twisted conditions necessary to recover them vary. Since the fractal spin liquids can have any combination of parities for the lengths of their generating polynomials, we highlight one example code for each case. Lastly, we note that our families only include child codes that are indecomposable. For instance, the HH Type-I code is not included as it is equivalent to two decoupled copies of the checkerboard model \cite{sorting}.

\onecolumngrid

\begin{table}[ht]
    \centering
    \caption{Explicit compactifications of parent HGP codes to selected child codes.}
    \begin{ruledtabular}
    \begin{tabular}{|c | c c|}
    Parent HGP & Child Codes & Twisted Conditions \\
    \hline \\[0.1mm]
    \multirow{5}{*}{$\begin{pmatrix}
            1 + a + b + c + d + e + f + g \\
            1 + h + i + j + k + l
        \end{pmatrix}$} & Haah: $\begin{pmatrix}
    1 + x + y + z \\
    1 + xy + xz + yz
\end{pmatrix}$ & $\begin{aligned}
a &= x, & b &= y, & c &= z \\
d &= a, & e &= a, & f &= a \\
g &= a, & h &= ab, & i &= ac \\
j &= bc, & k &= a, & l &= a
\end{aligned}$ \\[0.1mm] \\\cline{2-3} \\[0.1mm]
    & Checkerboard: $\begin{pmatrix}
    1 + x + y + z \\
    1 + x^{-1} + y^{-1} + z^{-1}
\end{pmatrix}$ & $\begin{aligned}
a &= x, & b &= y, & c &= z \\
d &= a, & e &= a, & f &= a \\
g &= a, & h &= a^{-1}, & i &= b^{-1} \\
j &= c^{-1}, & k &= a, & l &= a
\end{aligned}$ \\[0.1mm] \\\cline{2-3} \\[0.1mm]
    & HHB-A: $\begin{pmatrix}
    1 + x + y + z + xy + yz + xz + xyz \\
    1 + x^{-1}y + yz + xy + y^2 + yz^{-1}
\end{pmatrix}$ & $\begin{aligned}
a &= x, & b &= y, & c &= z \\
d &= ab, & e &= bc, & f &= ac \\
g &= abc, & h &= a^{-1}b, & i &= bc \\
j &= ac, & k &= b^2, & l &= bc^{-1}
\end{aligned}$  \\[0.1mm] \\\cline{2-3} \\[0.1mm]
    & fractal spin liquid (Fibonacci): $\begin{pmatrix}
    1 + z \\
    1 + x + x^{-1} + xy
\end{pmatrix}$ & $\begin{aligned}
a &= z, & b &= y, & c &= b \\
d &= b, & e &= b, & f &= b \\
g &= b, & h &= x, & i &= h^{-1} \\
j &= hb, & k &= b, & l &= b
\end{aligned}$ \\ \\[0.1mm]
    \hline \\[0.2mm]
    \multirow{3}{*}{$\begin{pmatrix}
            1 + a + b \\
            1 + c + d
        \end{pmatrix}$} 
        & BB code (Gross code): $\begin{pmatrix}
        1 + x^{3}y^{-1} + y \\
        1 + x^{-1}y^{3} + x
    
\end{pmatrix}$ & $\begin{aligned}
a &= b^{3}d^{-1}, & b &= y \\
c &= b^{-1}d^{3} & d &= x
\end{aligned}$ \\[0.1mm] \\\cline{2-3} \\[0.1mm]
    & Honeycomb color code: $\begin{pmatrix}
    1 + x + y \\
    1 + x^{-1} + y^{-1}
    
\end{pmatrix}$ & $\begin{aligned}
a &= x, & b &= y \\
c &= a^{-1} & d &= b^{-1}
\end{aligned}$  \\[0.1mm] \\\cline{2-3} \\[0.1mm]
    & fractal spin liquid example: $\begin{pmatrix}
    1 + (1 + x)y \\
    1 + (1 + x)z
\end{pmatrix}$ & $\begin{aligned}
a &= y, & b &= xy \\
c &= z & d &= a^{-1}bc
\end{aligned}$ \\ \\[0.2mm]
    \hline \\[0.2mm]
    \multirow{2}{*}{$\begin{pmatrix}
            1 + a + b \\
            1 + c
        \end{pmatrix}$} 
        & Serpinski prism model $\begin{pmatrix}
        1 + x + y \\
        1 + z
\end{pmatrix}$ & $\begin{aligned}
a &= x, & b &= y \\
c &= z
\end{aligned}$ \\[0.1mm] \\\cline{2-3} \\[0.1mm]
    & fractal spin liquid example $\begin{pmatrix}
    1 + (1 + x)y \\
    1 + z
    
\end{pmatrix}$ & $\begin{aligned}
a &= y, & b &= xy \\
c &= z &
\end{aligned}$ \\ \\[0.2mm]
    \end{tabular}
    \end{ruledtabular}
    \label{fracton_tree_table}
\end{table}

\bibliography{refs}

\end{document}